\newtheorem{theorem}{Theorem}[section]
\newtheorem{proposition}[theorem]{Proposition}
\theoremstyle{definition}
\theoremstyle{remark}
\newtheorem*{example}{Example}
\numberwithin{equation}{section}
\newcommand{\ZZ}{\mathbb{Z}} 
\newcommand{\RR}{\mathbb{R}} 
\newcommand{\CC}{\mathbb{C}} 
\newcommand{\id}{\mathds{1}}
\newcommand{\Ss}[1]{W^{k,p}(\mathbb{S}^1)}
\newcommand{\abs}[1]{\left\vert#1\right\vert}
\newcommand{\bra}[1]{\left\langle {#1}\right\vert}
\newcommand{\ket}[1]{\left\vert{#1}\right\rangle}
\newcommand{\todotony}[1]{\todo[inline, color=blue!20!green!30]{#1}}
\numberwithin{equation}{section}
\title[A dressing method for the Camassa-Holm equation]{Camassa-Holm cuspons, solitons and their interactions via the dressing method}
\author{Rossen Ivanov}
\address{School of Mathematical Sciences,Technological University Dublin, City Campus, Kevin Street, Dublin D08 NF82, Ireland}
\email{rossen.ivanov@dit.ie}
\author{Tony Lyons}
\address{Department of Computing and Mathematics, Waterford Institute of Technology, Waterford, Ireland}
\email{tlyons@wit.ie}
\author{Nigel Orr}
\address{School of Mathematical Sciences,Technological University Dublin, City Campus, Kevin Street, Dublin D08 NF82, Ireland}
\email{nigel.orr@dit.ie }
\begin{document}
\begin{abstract}
A dressing method is applied to a matrix Lax pair for the Camassa-Holm equation, thereby allowing for the construction of several global solutions of the system. In particular solutions of system of soliton and cuspon type are constructed explicitly. The interactions between soliton and cuspon solutions of the system are investigated. The geometric aspects of the Camassa-Holm equation ar re-examined in terms of quantities which can be explicitly constructed via the inverse scattering method.
\end{abstract}

\maketitle

\section{Introduction}
This paper aims to explai how the dressing method, well known in the soliton theory, can be applied to one of the most famous integrable equations of the last 20 years - the Camassa-Holm (or CH) equation. In particular, the method allows for the explicit construction of the soliton and cuspon solutions and for further investigation of the interactions between them. The Camassa-Holm equation is given by
\begin{equation}\label{eq:ch}
\left\{
\begin{aligned}
    q_t&+ 2u_x q + uq_x=0\\
    q&=u-u_{xx}
\end{aligned}
\right.\tag{CH}
\end{equation}
and in the following we impose the boundary conditions $\displaystyle{\lim_{\abs{x}\to\infty }}u(x,t)=u_0$, where $u_0>0$  is constant.
The CH equation admits both smooth and peaked travelling wave solutions cf. \cite{CH93,CHH94}. Physically the Camassa-Holm equation has attracted a great deal of interest as an approximate fluid model for two-dimensional water waves propagating over a flat bed \cite{CH93,CHH94,F95,DGH03,DGH04,J02,J03,J03a,CL09,HI11} as well as in other set-ups \cite{titi}. Interpreted as a fluid model, the solutions $u(x,t)$ represent the fluid particle velocity induced by the passing wave, or alternatively as the surface elevation associated with the wave.  Moreover, the system also constitutes a model for the propagation of nonlinear waves in  cylindrical hyper-elastic rods, in which the solutions $u(x,t)$ represent the radial stretching of a rod relative to the undisturbed state, see \cite{Dai98}.

For the past number of decades the Camassa-Holm equation has proven to be a remarkably fertile field of mathematical research, with  the volume of research papers dedicated to various aspects of the system most likely measured in thousands, and as such our bibliography is by no means exhaustive. A particularly striking feature of the Camassa-Holm equation relates to the existence of peaked solutions for the system, which are solutions of the form
\begin{equation}
    u(x,t) = q_0e^{-\abs{x-p_0t}}\quad \text{where }\lim_{\abs{x}\to\infty }u(x,t)=0,
\end{equation}
with $q_0$ and $p_0$ being constants. These peaked solutions (or peakons) are  weak solutions whose wave crests appear as peaks, see \cite{CH93,CHH94,CS00,M2}. In addition the Camassa-Holm equation also allows for the existence of breaking wave solutions, which are realised as solutions which remain bounded but whose gradient becomes unbounded in a finite time, cf. \cite{CH93, CHH94, CE98, CE98a,C00,PS}. The presence of both peaked and breaking wave solutions for the system \eqref{eq:ch} ensures the Camassa-Holm equation is a highly interesting physical model. To compliment the utility of the system in modelling a diversity of physical phenomena, the Camassa-Holm equation exhibits a rich mathematical structure. The equation is a member of a bi-Hamiltonian hierarchy of equations \cite{F80} and it is integrable with a Lax pair representation \cite{CH93}. A notable property of the system is its formulation as a geodesic flow on the Bott-Virasoro group \cite{M98,HMR98,HMR-PRL,CK03}.

Soliton solutions of the Camassa-Holm equation have been derived by manifold methods, including but not restricted to Hirota's method \cite{M1,M2,PI,PII,PIII}, via the B\"acklund transform method \cite{RS,Li04}, along with the inverse scattering method \cite{CGI,BMS2}. In the current work we develop a modified version of the inverse scattering method, namely the dressing method, to construct the cuspon and soliton solutions and cuspon-soliton interactions of the Camassa-Holm system.
The dressing method is an efficient variation of the inverse scattering transform which allows for a very direct construction of soliton solutions of integrable PDE \cite{ZS1,ZS2,ZMNP,GVY}. The essential procedure behind this dressing method is the construction of a nontrivial (dressed) eigenfunction of an associated spectral problem from the known (bare) eigenfunction, by means of the so-called dressing factor. This dressing factor is analytic in the entire complex plane (of values of the spectral parameter), except for a collection simple poles at pre-assigned discrete eigenvalues. This bare spectral problem is obtained for some trivial solution, e.g. $u(x,t)= u_0,$ where $u_0$ as indicated above is the asymptotic value of the solution, which we require to be constant and strictly positive. Since the potential terms of this bare spectral problem are simply constant, this means the spectral problem is readily solved to yield the bare eigenfunction. In the following we will outline the construction of the solutions of the CH equation associated with the discrete spectrum of the Lax operator, i.e. the solitons and cuspons.

The Camassa-Holm equation has many similarities with the integrable Degasperis-Procesi (or DP) equation \cite{DP,DHH}. The inverse scattering transform of the DP equation is studied in \cite{CIL,BMS1}, and in particular the dressing method for the DP equation is presented in \cite{CI16}. In the following we will reformulate equation \eqref{eq:ch} in the form of a matrix Lax pair, and impose an appropriate gauge transformation on this matrix equation, which reduces the spectral problem to the familiar Zakharov-Shabat spectral problem. We deduce several important reduction symmetries of the spectral problem, which are then utilised in constructing the dressing factor. This in turn allows for the construction of solutions of the dressed spectral problem, from solutions of the bare spectral problem which are readily solved. Finally, using these dressed eigenfunctions, we obtain the physical solutions we seek by solving a straight forward differential equation.

\section{The Spectral Problem for the Camassa-Holm Equation}\label{sec2}
\subsection{From the scalar to the matrix Lax pair}\label{sec2.1}
The following spectral problem
\begin{equation}\label{sec2.1eq1}
\left\{
\begin{aligned}
    &\phi_{xx}=\left(\frac{1}{4}+\lambda^2 q\right)\phi\\
    &\phi_{t}=\left(\frac{1}{2\lambda^2}-u\right)\phi_{x}+\frac{u_{x}}{2}\phi.
\end{aligned}
\right.
\end{equation}
may be seen to represent the Camassa-Holm equation (cf. equation \eqref{eq:ch}), by imposing the compatibility condition $\phi_{xxt}\equiv\phi_{txx}$ on the spectral function $\phi$ and comparing terms of equal order in $\lambda,$ \cite{CH93}. The constant $\lambda$ appearing in equation \eqref{sec2.1eq1} is the time-independent spectral parameter, while the potential $u(x,t)$ corresponds to a solution of the CH equation. Solutions  of the Camassa-Holm equation may be obtained from the spectral problem above by means of the Inverse Scattering Transform, and the reader is referred to the works \cite{CGI,CGI2} for further discussion in this regard. A discussion of the Inverse Scattering Transform applied to constructing periodic solutions of the system \eqref{eq:ch} may be found in \cite{CM99, C98,GH03}. We note that in contrast to some previous works we shall omit the dispersion term $u_x$ in the CH equation, and instead we will allow for a constant asymptotic value $u(x,t)\to u_0 >0$ as $\abs{x}\to\infty$. This is the setup adopted in \cite{RS}.

Suppose that $u(\cdot,t)-u_0$ is a Schwartz class function for all $t$, while the initial data is chosen such that $q(x,0) > 0$. Symmetry of the Camassa-Holm equation then ensures that $q(x,t) > 0$ for all $t$, cf. \cite{C01}. Letting $k^{2}=-\frac{1}{4}-\lambda^2 u_0$, the spectral parameter may be written as
\begin{equation} \label{lambda}
\lambda^2(k)= -\frac{1}{u_0}\left(k^{2}+\frac{1}{4}\right),
\end{equation}
and the reader is referred to \cite{C01} for a discussion of the spectrum of the problem formed by equations \eqref{sec2.1eq1}--\eqref{lambda}. Then the continuous spectrum in terms of $k$
corresponds to $k\in\RR$. The discrete spectrum (corresponding to $k\in\mathbb{C}_+$--the upper half-plane) consists of a finite number of points $k_{n}=i\kappa _{n}$,
$n=1,\ldots,N$ where $\kappa_n$ is real and $0<\kappa_{n}<1/2$, with the corresponding spectral parameter $\lambda_n=\lambda(i\kappa_n)$ being purely imaginary. Moreover for any $\kappa_n$ there are two such eigenvalues, denoted by $\lambda_n=\pm i \omega_n$ where  $\omega_n>0.$

We note that a discrete eigenvalue with $\kappa_n >1/2$ (if at all possible) would not lead to solutions $u(\cdot,t)-u_0$ from the Schwartz class. In such case $\lambda_n$ is real. Later we will find out that indeed this choice corresponds to solutions with a cusp at the crest (cuspons), which are clearly outside of the Schwartz class functions.

To implement the dressing method it is first necessary to reformulate the spectral problem in \eqref{sec2.1eq1} as a \textit{matrix Lax pair}. To achieve this we let $\phi_1$ denote an eigenfunction of equation \eqref{sec2.1eq1}, and  we observe that the first member of this spectral problem may be reformulated as
\begin{equation}\label{sec2.1eq2}
  \left(\partial_x-\frac{1}{2}\right)\left(\partial_x+\frac{1}{2}\right)\phi_{1}=\lambda^2 q\phi_{1}.
\end{equation}
The second member of equation \eqref{sec2.1eq1} may be re-written in terms of the following auxiliary spectral function
\begin{equation} \label{sec2.1eq3}
    \phi_{2}:=\frac{1}{\lambda}\left(\partial+\frac{1}{2}\right)\phi_{1}
\end{equation}
from which we immediately deduce that
\begin{equation*}
    \left(\partial_x-\frac{1}{2}\right)\phi_2 = \lambda q \phi_{1},
\end{equation*}
having imposed equation \eqref{sec2.1eq2}. Defining the eigenvector
\[\Phi=\left(\begin{array}{cc}\phi_1\\ \phi_2\end{array}\right),\]
we reformulate the spectral problem \eqref{sec2.1eq1} according to
\begin{equation}\label{sec2.1eq4}
\left\{
\begin{aligned}
&\Phi_{x}=\mathcal{L}\Phi\qquad \mathcal{L}:=\left(\begin{array}{cc}-\frac{1}{2} &\lambda \\ \lambda q & \frac{1}{2}\end{array}\right)\\
&\Phi_t=\mathcal{M}\Phi\qquad \mathcal{M}:=\left(\begin{array}{cc}\frac{1}{2}(u+u_x)-\frac{1}{4\lambda^2} &\frac{1}{2\lambda}-\lambda u\\\frac{1}{2\lambda}(q+u_x+u_{xx})-\lambda u q & \frac{1}{4\lambda^2}-\frac{1}{2}(u+u_x)\end{array}\right)
\end{aligned}
\right.
\end{equation}
which constitutes a matrix Lax pair for the Camassa-Holm equation. Moreover, $\mathcal{L}$, $\mathcal{M}$ take values in the $\frak{sl}(2)$ algebra, thus $\Phi$ belongs to the corresponding group, $SL(2)$.

The compatibility condition $\Phi_{tx}\equiv\Phi_{xt}$ for every eigenvector $\Phi$ immediately implies the \textit{zero-curvature condition}, namely
\begin{equation}\label{sec2.1eq5}
    \mathcal{L}_{t}-\mathcal{M}_{x}+\left[\mathcal{L},\mathcal{M}\right]=0,
\end{equation}
where the bi-linear operator $\left[\cdot,\cdot\right]$ denotes the usual matrix commutator. As with the scalar formulation of the spectral problem, comparison of terms  of equal order in the spectral parameter $\lambda$ within the zero-curvature condition yields
\begin{description}
  \item[$\mathcal{O}(\lambda^{0})$] $u-u_{xx} =q$
  \item[$\mathcal{O}(\lambda^{1})$] $q_t+2u_x q+u q_x=0$,
\end{description}
which is precisely the Camassa-Holm equation.


\subsection{The gauge transformed $SL(2)$ spectral problem}\label{sec2.2}
In the following, the dressing method will be implemented on a \text{gauge equivalent} matrix-valued eigenfunction $\Psi \in SL(2)$, defined as follows
\begin{equation}\label{sec2.2eq1}
    \Phi=:G\Psi,
\end{equation}
where the gauge transformation $G \in SL(2)$ is given by
\begin{equation}\label{sec2.2eq2}
    G = \left(\begin{array}{cc}
                q^{-\frac{1}{4}} & 0\\
                0 & q^{\frac{1}{4}}
                \end{array}\right).
\end{equation}
Making this replacement in the spectral problem \eqref{sec2.1eq4}, the gauge equivalent spectral problem for $\Psi$ is given by
\begin{equation}\label{sec2.2eq3}
\begin{aligned}
&\Psi_{x} = \tilde{L}\Psi\qquad \Psi_{t} =\tilde{M}\Psi,
\end{aligned}
\end{equation}
where we denote
\begin{equation}\label{sec2.2eq4}
\tilde{L}:=G^{-1}\mathcal{L}G-G^{-1}G_{x}\qquad\tilde{M}:=G^{-1}\mathcal{M}G-G^{-1}G_{t}.
\end{equation}
In particular we find that the equation for $\Psi$ may be written as
\begin{equation}\label{sec2.2eq5}
\left\{
\begin{aligned}
&\Psi_x + (\tilde{h} \sigma_{3}-\lambda \sqrt{q} J) \Psi(x,t,\lambda)=0\\
&\tilde{h} = \frac{1}{2}-\frac{q_x}{4q},\qquad J=\left(\begin{matrix}0&1\\1&0\end{matrix}\right)\qquad \sigma_3=\left(\begin{matrix}1&0\\0&-1\end{matrix}\right).
\end{aligned}
\right.
\end{equation}
We note that this spectral problem appears to be ``energy dependent'' since the potential appears in combination with the spectral parameter in the off diagonal terms. However, introducing the re-parameterisation
\begin{equation}\label{y}
dy=\sqrt{q}dx, \qquad y=y(x,t),
\end{equation}
the spectral problem acquires the form of the standard Zakharov-Shabat spectral problem
\begin{equation}\label{ZS}
\begin{cases}
&\Psi_y+ L(\lambda) \Psi(y,t,\lambda) =0,\\
&L(\lambda)=h \sigma_3 -\lambda J\\
&h= \frac{1}{2\sqrt{q}}-\frac{q_y}{4q},
\end{cases}
\end{equation}
and the reader is referred to \cite{ZS1,ZS2,ZMNP,GVY} for further discussion concerning such spectral problems. Since $L(\lambda)$ takes values in the Lie algebra $\mathfrak{sl}(2)$ it follows that the eigenfunctions take values in the corresponding Lie group - $SL(2).$
From \eqref{y} one can write $x=X(y,t),$ which gives the parametric representation of $x$ for given $t$. This is a very important object in what follows due to the fact that the solution can be expressed through $X(y,t).$ We note that when $x \to \infty,$  asymptotically $q \to u_0$. Then in the view of \eqref{y} it is natural to expect that when $y \to \infty,$ $X\to \frac{y}{\sqrt{u_0}} +\text{const}.$
\begin{theorem}
Suppose that when $y\to \infty$ we have  $X_y\to \frac{1}{\sqrt{u_0}} $ and $X_t\to u_0$. Then the solution in parametric form can be represented as $u(X(y,t),t)=X_t(y,t).$
\end {theorem}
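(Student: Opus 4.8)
The plan is to work directly with the coordinate change $dy=\sqrt q\,dx$, extract a first-order relation for $y_t$ from the Camassa-Holm equation, and then use the prescribed asymptotics as $y\to\infty$ to eliminate the one free integration ``constant'' (a function of $t$ alone).

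First I would record the elementary consequences of $dy=\sqrt q\,dx$: at fixed $t$ one has $y_x=\sqrt q$, and since $x=X(y,t)$ inverts $y=y(x,t)$, this gives $X_y=1/\sqrt q$; in particular the hypothesis $X_y\to 1/\sqrt{u_0}$ is precisely the statement $q\to u_0$ at spatial infinity. Differentiating the identity $y(X(y,t),t)=y$ with respect to $t$ yields $y_x X_t+y_t=0$, hence $X_t=-y_t/\sqrt q$, so the whole problem reduces to identifying $y_t$.

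Next I would compute the mixed derivative $y_{xt}$ in two ways. On one side $y_{xt}=\partial_x(y_t)$. On the other side $y_{xt}=\partial_t\sqrt q=q_t/(2\sqrt q)$, and substituting the Camassa-Holm equation $q_t=-2u_xq-uq_x$, together with $\sqrt q=y_x$ and $q_x/(2\sqrt q)=\partial_x\sqrt q=y_{xx}$, the right-hand side collapses to $-u_xy_x-uy_{xx}=-\partial_x(u\sqrt q)$. Comparing the two expressions gives $\partial_x\bigl(y_t+u\sqrt q\bigr)=0$, so $y_t=-u\sqrt q+C(t)$ for some function $C$ depending on $t$ alone, and therefore $X_t=-y_t/\sqrt q=u-C(t)/\sqrt q$. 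Letting $y\to\infty$ and using the standing boundary condition $u\to u_0$ together with the hypotheses $q\to u_0$ (equivalently $X_y\to 1/\sqrt{u_0}$) and $X_t\to u_0$, one obtains $u_0=u_0-C(t)/\sqrt{u_0}$, forcing $C(t)\equiv 0$. Hence $X_t=u$ evaluated along $x=X(y,t)$, that is, $u(X(y,t),t)=X_t(y,t)$.

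The substantive step is the algebraic identity rewriting $q_t/(2\sqrt q)$ as the exact $x$-derivative $-\partial_x(u\sqrt q)$ via the CH equation; the rest is bookkeeping. The points requiring care are the regularity justifications underlying that bookkeeping: that $q>0$ with controlled asymptotics makes $y(x,t)$ a bona fide smooth change of variable with $y_x$ bounded away from $0$ near infinity (so that $y\to\infty$ genuinely corresponds to $x\to\infty$ and $X$ is globally defined and smooth), and that the mixed partials of $y$ commute, legitimising $y_{xt}=(y_t)_x$.
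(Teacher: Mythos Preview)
Your proof is correct and follows essentially the same route as the paper: both hinge on rewriting the CH equation as the conservation law $(\sqrt{q})_t + (u\sqrt{q})_x = 0$, integrating once in the spatial variable, and then using the asymptotics to kill the resulting free function of $t$. The only cosmetic difference is that you carry out the calculus via the inverse map $y(x,t)$ (so the undetermined function appears as $C(t)$ in $y_t=-u\sqrt{q}+C(t)$), whereas the paper works directly with $X(y,t)$ and arrives at $\bigl((u-X_t)/X_y\bigr)_y=0$, i.e.\ $u=X_t+F(t)X_y$.
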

\begin{proof}
We can write \eqref{eq:ch} in the form $$\partial_t \sqrt{q(X(y,t),t)}+\partial_X (\sqrt{q(X,t)} u(X,t))=0,$$ where $X$ depends on $y$ and $t$. We reformulate these derivative in terms of $(y,t)$-varibles and noting that $$\frac{d}{dt}=\frac{\partial}{\partial t}+ X_t \frac{\partial}{\partial X}, \qquad \frac{\partial}{\partial X}=\frac{1}{X_y}\frac{\partial}{\partial y}, \qquad  \sqrt{q(X(y,t),t)}=\frac{1}{X_y(y,t)}$$
we find that
$$\frac{d}{dt} \sqrt{q(X(y,t),t)} - X_t {\partial_X}\sqrt{q(X(y,t),t)}+ \partial_X (\sqrt{q(X,t) } u(X,t))=0,$$ and with some algebra this gives $\left( \frac{u(X,t)-X_t}{X_y}\right)_y=0.$ Thus $u(X,t)=X_t + F(t)X_y$ for some function $F(t)$.  The boundary conditions when $y \to \infty$ give $F(t)\equiv 0$.
\end{proof}

\subsection{Diagonalisation}\label{sec2.3}
Imposing the trivial solution $u(x,t)\equiv u_0$ on the re-parameterised spectral problem, the the so-called \textit{bare} spectral problem emerges, given by
\begin{equation}\label{sec2.3eq1}
\begin{cases}
&\Psi_{0,y}+(h_0\sigma_3-\lambda J)\Psi_0=0,\\
&\Psi_{0,t}-\frac{1}{2h_0}\left(u_0-\frac{1}{2\lambda^2}\right ) (h_0\sigma_3-\lambda J)\Psi_0=0,\\
&h_0=\frac{1}{2\sqrt{u_0}}.
\end{cases}
\end{equation}
Since $dy = \sqrt{u_0}dx$ then $y$ is simply a re-scaling of $x$ for this bare spectral problem. The solutions of this linear system are readily obtained, and found to be of the form
\begin{equation}\label{Psi0}
\Psi_0(y,t,\lambda)= V(\lambda) e^{-\sigma_3 \Omega(y,t,\lambda)}V^{T}(\lambda) C,
\end{equation}
where $C$ is an arbitrary constant matrix and
\begin{equation}\label{L-V}
\left\{
\begin{aligned}
&\Omega(y,t,\lambda)=\Lambda(\lambda) \left(y-\frac{1}{2h_0} \left(u_0 -\frac{1}{2\lambda^2}\right)t \right)\\
&\Lambda(\lambda)=\sqrt{h_0^2 + \lambda^2}\\
&V(\lambda)=\left(\begin{array}{cc}
        \cos \theta & -\sin \theta \\ \sin \theta & \cos \theta \\
\end{array}\right) \\
& \cos \theta =\sqrt{\frac{\Lambda+h_0}{2\Lambda}}, \qquad \sin \theta =\sqrt{\frac{\Lambda-h_0}{2\Lambda}}.
\end{aligned}
\right.
\end{equation}
In the following, the spectral parameter $\lambda$ will be restricted to ensure $\Lambda$ is always real and positive, however this in turn will mean $\theta$ may be either real or imaginary.

\subsection{Symmetry reductions of the spectral problem}\label{sec2.4}
It is easily verified that the spectral operator $L(\lambda)= h\sigma_3-\lambda J$ appearing in equation \eqref{ZS}  possesses the following $\ZZ_2$-symmetry reductions:
\begin{equation}\label{sec2.4eq1}
\begin{split}
       & \sigma_{3}\bar{L}(-\bar{\lambda})\sigma_3 = {L}(\lambda) \\
        & \bar{L}(\bar{\lambda})=L(\lambda)
\end{split}
\end{equation}
where $\bar{\lambda}$ denotes the complex-conjugate of $\lambda.$ This reduction is found to arise due to the symmetry relation $\sigma_3J\sigma_3=-J$. Moreover, the same $\ZZ_2$-symmetry reduction is observed by the associated operator $M(\lambda)$. A crucial result of the symmetry relation \eqref{sec2.4eq1} is that the potential $h(y,t)$ must be real.
Furthermore, since all aspects of the spectral problem \eqref{L-V} obey the  reductions in \eqref{sec2.4eq1}, the associated solutions $\Psi(y,t,\lambda)$ and the dressing factor $g(y,t,\lambda)$ (cf. Section \ref{sec3}) observe the following reductions:
\begin{equation}\label{Psisym1}
\begin{split}
&\sigma_{3}\bar{\Psi}(y,t,-\bar{\lambda})\sigma_3 = {\Psi}(y,t,\lambda), \\
&\bar{\Psi}(y,t,\bar{\lambda}) = {\Psi}(y,t,\lambda).
\end{split}
\end{equation}
Moreover, noting that
\begin{equation}\label{sec2.4eq4}
J^2=\mathds{1},\qquad J\sigma_3J = -\sigma_3,
\end{equation}
while also using $\Psi^{-1}(x,t,\lambda){\Psi}(x,t,\lambda)=\mathds{1}$, we observe that
\begin{equation}\label{sec2.4eq6}
\Psi^{-1}_y(\lambda) = \Psi^{-1}(\lambda)L(\lambda) \Rightarrow \Psi^{-1}_y(\lambda)^{T} = L(\lambda)\Psi^{-1}(\lambda)^{T},
\end{equation}
having used $L^{T}(\lambda)=L(\lambda)$ in the last equation. Hence with equation \eqref{sec2.4eq4} we deduce
\begin{equation}\label{sec2.4eq7}
        \left(J\Psi^{-1}(\lambda)^{T}J\right)_y + L(-\lambda)\left(J\Psi^{-1}(\lambda)^{T}J\right)=0,
\end{equation}
that is to say $\Psi(-\lambda)$ and $J\Psi^{-1}(\lambda)^{T}J$ satisfy the same spectral problem, the solutions of which are unique (when fixed by the corresponding asymptotics in $y$ and $\lambda$), and thus
\begin{equation}\label{sec2.4eq8}
\Psi^{-1}(y,t,\lambda) = J\Psi(y,t,-\lambda)^{T}J.
\end{equation}

\section{The Dressing Method}\label{sec3}
\subsection{The dressing factor}\label{sec3.1}
The soliton, cuspon and soliton-cuspon solutions have associated discrete spectra containing a finite number distinct eigenvalues $\left\{\lambda_n\right\}_{n=1}^{N}$,
with the eigenfunctions of the spectral problem \eqref{ZS} being singular at these discrete eigenvalues. Starting from a trivial (or bare) solution $u(x,t)=u_0$, where $u_0$ is constant,
with its associated eigenfunction $\Psi_0(x,t,\lambda)$, we may obtain an eigenfunction $\Psi(x,t,\lambda)$ corresponding to soliton solutions, via the {\it dressing factor} $g(x,t,\lambda)$, defined by the following
\begin{equation}\label{sec3.1eq1}
        \Psi(x,t,\lambda) = g(x,t,\lambda) \Psi_0(x,t,\lambda).
\end{equation}
The dressing factor $g(x,t,\lambda)$ is singular at each $\lambda=\lambda_n$ belonging to the discrete spectrum, and is otherwise analytic for $\lambda\in\CC_+$.

In the following we shall work with the $y$-representation introduced in equation \eqref{ZS}, which we implement via the diffeomorphism $x=X(y,t)$.
Under this representation the dressing factor then satisfies
\begin{equation}\label{sec3.1eq3}
        \partial_y g +h\sigma_3 g - gh_0 \sigma_3 -\lambda[J,g]=0,
\end{equation}
where $g(y,t, \lambda)$ is to be interpreted as $g(X(y,t),t,\lambda)$. Moreover, since the solution $\Psi(X(y,t),t,\lambda)\equiv \Psi(y,t,\lambda)$ belongs to the Lie group $SL(2)$, the  factor $g\in SL(2)$  and also satisfies the reductions given by equations  \eqref{Psisym1} and \eqref{sec2.4eq8}, namely
\begin{equation}\label{gsym}
\begin{split}
        & \sigma_{3}\bar{g}(y,t,-\bar{\lambda})\sigma_3 = {g}(y,t,\lambda)  \quad \text{or} \quad \bar{g}(y,t,\bar{\lambda}) = {g}(y,t,\lambda), \\
&g^{-1}(y,t,\lambda) = Jg(y,t,-\lambda)^{T}J.
\end{split}
\end{equation}

The physical solutions $u(x,t)$ are extracted from the associated spectral functions $\Psi(x,t,\lambda=0)$, which we evaluate via the spectral problem \eqref{sec2.2eq5}  at $\lambda=0$. These solutions of the spectral problem are of the form
\begin{equation}\label{sec3.2eq1}
        \Psi(X(y,t),t,0)= e^{-\frac{1}{2}\left(X-\ln\sqrt{q}\right)\sigma_3}.
\end{equation}
Meanwhile, re-parameterising in terms of the $y$-variable, the eigenfunction of the dressed spectral problem at $\lambda=0$ (cf. equation \eqref{ZS}) may be written as
\begin{equation}\label{sec3.2eq3}
\Psi(X(y,t),t,0) = g(y,t,0)\Psi_{0}(y,t,0)K_0,
\end{equation}
where $\Psi_{0}(y,t,0)$ is a solution of the bare spectral problem when $\lambda=0$. Hence the matrix $K_0\in SL(2)$ is an arbitrary constant matrix, a consequence of equation \eqref{sec3.1eq1}.
We note however that the $t$-dependence of the bare spectral function $\Psi_{0}(y,t,\lambda)$ becomes singular when $\lambda=0$, an observation which is immediately obvious when we refer to the second equation of the spectral problem \eqref{sec2.3eq1}. As such, in equation \eqref{sec3.2eq3} we only consider the time-independent solution, namely, the solution which satisfies the spectral problem corresponding to the $L(\lambda=0)$-operator.

To circumvent singular behaviour of $\Psi_{t}(x,t,\lambda)$ at $\lambda=0$, we note from equation \eqref{y} that as $x\to \infty$ then $y\to \infty$ since $q>0$ for all $t$. Specifically we find $y\to\sqrt{u_0}x$ as $x\to\infty$, in which case
\[\Psi(X,t,0)=\Psi_{0}(y=X\sqrt{u_0},t,0)K_0\qquad \text{as }X\to\infty.\]
Hence, we conclude that $\Psi(X\to \infty,t,0) $ should be time-independent. Additionally, differentiating equation \eqref{sec3.2eq1} with respect to $t$, we find \[\Psi_t(X,t,0)=-\frac{\sigma_3}{2}\left(X_t-\frac{q_t}{2q}\right)\Psi(X,t,0),\]
and  $\Psi_t(X\to\infty,t,0)=-\frac{\sigma_3u_0}{2}\Psi(X\to\infty,t,0),$ having imposed $X_t=u(X,t)\to u_0$ as $X\to\infty$. Hence $\Psi(X,t,0)$ must be of the form
\[\Psi(X,t,0)=e^{-\frac{1}{2}\left(X-\ln\sqrt{q}-u_0 t\right )\sigma_3}\]
in order to have the appropriate asymptotic behaviour (the correction being independent of  $y$).
It follows that
\begin{equation}\label{sec3.2eq2}
e^{-\frac{1}{2}\left(X(y,t)-\ln\sqrt{q}-u_0 t \right)\sigma_3}
= g(y,t,0) e^{-\frac{1}{2\sqrt{u_0}}\sigma_3y}K_0,
\end{equation}
which gives a differential equation for $X$ since $\partial_y X=q^{-1/2}$, cf. equation \eqref{y}. Thus it provides the change of variables $x=X(y,t) $ in parametric form, where $y$ serves as the parameter. This of course is valid only in cases where the dressing factor is known and in what follows we shall explain how to construct it.

\subsection{The dressing factor with a real simple pole}
In the $SL(2)$ Zakharov-Shabat spectral problems, the simplest form of $g$ possesses one simple pole \cite{ZMNP,GVY}, which leads to the following:
\begin{proposition}\label{sec3.1prop1}
        The dressing factor $g(y,t,\lambda)$ is assumed to be of the form
        \begin{equation}\label{sec3.1eq4}
        g = \id+\frac{2 \lambda_1 B(y,t)}{\lambda-\lambda_1},\quad \text{where }\lambda_1 \in\RR
        \end{equation}
        and $B$ is a matrix-valued residue of rank 1.
\end{proposition}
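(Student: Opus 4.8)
The plan is to read the stated form not as a bare postulate but as the consequence of combining the minimal pole structure with the reductions \eqref{gsym}. First I would invoke the general shape of a dressing factor for a Zakharov--Shabat problem: normalising so that $g(y,t,\lambda)\to\id$ as $\lambda\to\infty$ (so that $\Psi=g\Psi_0$ inherits the large-$\lambda$ behaviour of the bare solution) and admitting a single simple pole in $\CC_+\cup\RR$, a partial-fraction expansion forces
$$g(y,t,\lambda)=\id+\frac{R(y,t)}{\lambda-\lambda_1}$$
for a constant $\lambda_1$ and a $2\times2$ matrix-valued function $R$. Applying the reality reduction in the form $\bar g(y,t,\bar\lambda)=g(y,t,\lambda)$, the left-hand side has its pole at $\lambda=\bar\lambda_1$, so matching with the pole of $g$ at $\lambda_1$ yields $\lambda_1\in\RR$ and $R=\bar R$ real. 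It is precisely here that the two options in \eqref{gsym} separate: the alternative $\sigma_3\bar g(-\bar\lambda)\sigma_3=g$ places the mirror pole at $-\bar\lambda_1$ and instead forces $\lambda_1$ purely imaginary --- the soliton, rather than cuspon, branch.

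Next I would exploit the reduction $g^{-1}(y,t,\lambda)=Jg(y,t,-\lambda)^{T}J$ from the second line of \eqref{gsym} (inherited from \eqref{sec2.4eq8}). Substituting the partial-fraction form and using $J^2=\id$ gives
$$g^{-1}(y,t,\lambda)=\id-\frac{JR^{T}J}{\lambda+\lambda_1},$$
so $g^{-1}$ carries its unique simple pole at $\lambda=-\lambda_1$ with residue $-JR^{T}J$. Imposing $g(\lambda)g^{-1}(\lambda)\equiv\id$ and clearing the denominator $(\lambda-\lambda_1)(\lambda+\lambda_1)$ produces the polynomial identity
$$(\lambda+\lambda_1)R-(\lambda-\lambda_1)JR^{T}J-RJR^{T}J=0,$$
valid for all $\lambda$. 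Evaluating at $\lambda=\lambda_1$ gives $RJR^{T}J=2\lambda_1 R$, and at $\lambda=-\lambda_1$ gives $RJR^{T}J=2\lambda_1 JR^{T}J$; since $\lambda_1\neq0$ these combine to $R=JR^{T}J$, whence $R^{2}=RJR^{T}J=2\lambda_1 R$.

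Finally I would set $B:=\tfrac{1}{2\lambda_1}R$, so that $g=\id+\tfrac{2\lambda_1 B}{\lambda-\lambda_1}$ exactly and the relation above becomes $B^{2}=B$; thus $B$ is an idempotent, i.e.\ a projector. A $2\times2$ idempotent has rank $\operatorname{tr}B\in\{0,1,2\}$: rank $0$ gives $B=0$ and $g=\id$, the undressed case, while rank $2$ gives $B=\id$ and $g=\tfrac{\lambda+\lambda_1}{\lambda-\lambda_1}\id$, a scalar factor commuting with $L(\lambda)$ that leaves the potential --- hence the solution --- unchanged. So the only genuinely new dressing factor with a single simple real pole has $\operatorname{rank}B=1$, i.e.\ $B$ (equivalently the residue $R=2\lambda_1 B$) is a rank-one projector; the extra constraint $B=JB^{T}J$ records that this projector is determined by a single vector, the starting point for the explicit construction that follows.

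The step I expect to be the main obstacle is the bookkeeping of pole locations under the $J$-reduction: because it sends the pole of $g$ at $\lambda_1$ to a pole of $g^{-1}$ at $-\lambda_1$, the consistency condition $gg^{-1}=\id$ must be read off at both points, and it is the coupling of these two residue equations that pins down $B^{2}=B$. A secondary subtlety worth flagging is that $\det g=\tfrac{\lambda+\lambda_1}{\lambda-\lambda_1}$ is not identically $1$, so the assertion $g\in SL(2)$ should be understood at the level of the $\mathfrak{sl}(2)$ residue/algebraic structure rather than as a literal determinant condition; choosing the normalisation $R=2\lambda_1 B$ is exactly what turns the quadratic residue relation into the clean projector identity.
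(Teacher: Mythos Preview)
Your argument is sound, but note that the paper does not actually \emph{prove} this proposition: the word ``assumed'' in the statement is deliberate. The paper introduces \eqref{sec3.1eq4} as an ansatz (the simplest single-pole form, citing \cite{ZMNP,GVY}), and only \emph{afterwards}, in equations \eqref{sec3.1eq5}--\eqref{sec3.1eq8}, uses the $J$-reduction \eqref{gsym} to deduce the residue identities $B(\id-JB^TJ)=0$, $(\id-B)JB^TJ=0$ and hence that $B$ is a rank-one projector, with reality of $B$ coming from $\bar g(\bar\lambda)=g(\lambda)$. Your approach reverses this logic: you start from the single-pole hypothesis plus normalisation at infinity, use the reality reduction to force $\lambda_1\in\RR$, and then extract $B^2=B$ and $B=JB^TJ$ from $gg^{-1}=\id$ evaluated at $\lambda=\pm\lambda_1$ --- algebraically the same residue computation as \eqref{sec3.1eq5}--\eqref{sec3.1eq6}, but packaged as a derivation rather than a consistency check. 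What you add beyond the paper is the explicit exclusion of the rank-$0$ and rank-$2$ cases (trivial and scalar dressing respectively), which justifies the rank-$1$ clause that the paper simply postulates. Your closing caveat about $\det g=\tfrac{\lambda+\lambda_1}{\lambda-\lambda_1}\neq1$ is well taken and is a point the paper glosses over.
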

By virtue of equation \eqref{gsym} and Proposition \ref{sec3.1prop1},  we deduce that the dressing factor must satisfy
\begin{equation}\label{sec3.1eq5}
\begin{aligned}
\left(\id + \frac{2\lambda_1 B}{\lambda-\lambda_1}\right)\left(\id-\frac{2 \lambda_1 J B^TJ}{\lambda+\lambda_1}\right) =\id,
\end{aligned}
\end{equation}
and taking residues as $\lambda\to\pm \lambda_1 $ we observe
\begin{equation}\label{sec3.1eq6}
\left\{
\begin{aligned}
&B\left(\id -JB^{T}J\right)=0\\
&\left(\id-B\right)JB^{T}J=0.
\end{aligned}
\right.
\end{equation}
Rewriting the matrix $B$ as
\begin{equation}\label{sec3.1eq7}
    B= \ket{n}\bra{m},\quad\text{with } \ket{n}=\left(\begin{matrix}n_1\\n_2\end{matrix}\right)\text{ and } \bra{m}=\left(\begin{matrix}m_1&m_2\end{matrix}\right),
\end{equation}
equations \eqref{sec3.1eq6}--\eqref{sec3.1eq7} combined with the symmetry relation \eqref{gsym} now ensure
\begin{equation}\label{sec3.1eq8}
    \bra{n}=\frac{\bra{m}J}{\bra{m}J\ket{m}}\Rightarrow B=\frac{\ket{m}\bra{m}J}{\bra{m}J\ket{m}},
\end{equation}
in which case $B=B^2$ meaning $B$ is a projector.
Moreover, equation \eqref{sec3.1eq8} combined with the symmetry relation $\bar{g}(y,t,\bar{\lambda}) = {g}(y,t,\lambda)$  of \eqref{gsym} also yields that $B$ should be real.

Replacing equation \eqref{sec3.1eq4} in equation \eqref{sec3.1eq3} and taking residues as $\lambda\to \lambda_1 $ and $\lambda\to\infty$, we have
\begin{equation}\label{sec3.1eq10}
\begin{split}
(h&-h_0)\sigma_3=2\lambda_1 [J,B], \\
  B_y& + \left(h_0\sigma_3+\lambda_1 J\right)B-B\left(h_0\sigma_3-\lambda_1 J\right)-2\lambda_1 BJB=0.
\end{split}
\end{equation}
Replacing equation \eqref{sec3.1eq8} in equation \eqref{sec3.1eq10}, multiplying everywhere by $J$ from the right and using $J\sigma_3J=-\sigma_3$ we have
\begin{multline}\label{sec3.1eq12}
\left(\ket{m_y}+\left(h_0\sigma_3+ \lambda_1 J\right)\ket{m}\right) \frac{\bra{m}}{\bra{m}J\ket{m}}+\frac{\ket{m}}{\bra{m}J\ket{m}}\left(\bra{m_y}+\bra{m}\left(h_0\sigma_3+\lambda_1 J\right)\right)\\
-\frac{2\bra{m_y}J\ket{m}\ket{m}\bra{m}}{\bra{m}J\ket{m}^2}-\frac{2\lambda_1\bra{m}J^2\ket{m}\ket{m}\bra{m}}{\bra{m}J\ket{m}^2}=0.
\end{multline}
Assuming
\begin{equation}\label{sec3.1eq13}
 \ket{m_y}+\left(h_0\sigma_3+\lambda_1 J\right) \ket{m} = 0,
\end{equation}
we also observe that  equation \eqref{sec3.1eq12} is satisfied identically provided \eqref{sec3.1eq13} holds. Thus $\ket{m}$ is an eigenvector of the  bare spectral problem (evaluated at $\lambda=-\lambda_1$), in which case $\ket{m}$ is known explicitly.

\section{Cuspons, solitons and the cuspon-soliton interaction}\label{sec4}
\subsection{The one-cuspon solution}\label{sec4.1}
Equation \eqref{sec3.1eq13} suggest that $\ket{m}$ satisfies the bare spectral problem \eqref{sec2.3eq1} with $\lambda=-\lambda_1$, i.e. with spectral operator \[L_{0}(y,t,-\lambda_1)=h_0\sigma_3+\lambda_1 J\]
Furthermore equation \eqref{sec3.1eq8} allows us to solve for $\ket{n}$ explicitly, thereby providing an explicit formula for the dressing factor $g(y,t,\lambda)$.
We can write
\begin{equation}\label{sec3.3eq1}
  \ket{m}=\Psi(y,t,-\lambda_1)\ket{m_0}
\end{equation}
where $\ket{m_0}$ is a constant vector, and $\Psi(y,t,-\lambda_1)\in SL(2)$ satisfies the bare spectral problem
\begin{equation}\label{sec3.3eq2}
\begin{cases}
 &\Psi_{y} +L_{0}(y,t,-\lambda_1)\Psi=0\\
 &\Psi_{t}-\frac{1}{2h_0}\left(u_0-\frac{1}{2\lambda_1^2}\right ) L_{0}(y,t,-\lambda_1)\Psi=0.
\end{cases}
\end{equation}
With $\lambda=-\lambda_1,$  we have $\Lambda = \sqrt{h_0^2+ \lambda_1^2} > h_0,$
\begin{equation}\label{sec3.2eq4}
 \sin(\theta)=\sqrt{\frac{\Lambda-h_0}{2\Lambda}}\qquad \cos(\theta)= \sqrt{\frac{h_0+\Lambda}{2\Lambda}},
\end{equation} thus we conclude that  $\theta$ is real.

It follows from equation \eqref{L-V} that
\begin{equation}\label{sec3.2eq5}
\left\{
\begin{aligned}
 &\Psi(y,t,-\lambda_1)=V e^{-\sigma_3\Omega(y,t)}V^{-1}\\
 &\Omega(y,t)=\Omega(y,t,\lambda_1)=\Lambda\left(y-\frac{1}{2h_0}\left(u_0-\frac{1}{2\lambda_1^2}\right )t\right)
\end{aligned}
\right.
\end{equation}
while equation \eqref{sec3.3eq1} now ensures
\begin{equation}\label{sec3.2eq6}
 \ket{m}=\left(\begin{matrix}\mu_{01}e^{-\Omega(y,t)}\sqrt{\frac{h_0+\Lambda}{2\Lambda}}-\mu_{02}e^{\Omega(y,t)}\sqrt{\frac{\Lambda-h_0}{2\Lambda}}\\ \mu_{01}e^{-\Omega(y,t)}\sqrt{\frac{\Lambda-h_0}{2\Lambda}}+\mu_{02}e^{\Omega(y,t)}\sqrt{\frac{h_0+\Lambda}{2\Lambda}}\end{matrix}\right)
\end{equation}
where the real coefficients $\mu_{01}$ and $\mu_{02}$ are given by
\begin{equation}\label{sec3.2eq7}
  V^{-1}\ket{m_0}=\left(\begin{matrix}\mu_{01}\\ \mu_{02}\end{matrix}\right).
\end{equation}
Making the replacement $\nu_1=\frac{\mu_{01}}{\sqrt{2\Lambda}}$ and $\nu_2=\frac{\mu_{02}}{\sqrt{2\Lambda}}$ we simplify $\ket{m}=(m_1, m_2)^T$ according to
\begin{equation}\label{sec3.2eq8}
  \ket{m}=\left(\begin{matrix}\nu_1e^{-\Omega(y,t)}\sqrt{h_0+\Lambda}-\nu_2e^{\Omega(y,t)}\sqrt{\Lambda-h_0}\\ \nu_1e^{-\Omega(y,t)}\sqrt{\Lambda-h_0}+\nu_2e^{\Omega(y,t)}\sqrt{h_0+\Lambda}\end{matrix}\right).
\end{equation}
It follows from equations \eqref{sec3.1eq4}, \eqref{sec3.1eq8} and \eqref{sec3.2eq8} that
\begin{equation}\label{sec3.2eq9}
g(y,t;0)= \left(\begin{array}{cc} 0 &  -\frac{m_1}{m_2} \\ -\frac{m_2}{m_1} & 0 \end{array}\right),
\end{equation}
while letting $y\to\infty$ we also deduce the form of $K_0$, namely
\begin{equation}\label{sec3.2eq10}
K_0=g^{-1}(y\to -\infty ,t;0)= \left(\begin{array}{cc} 0 &  -\sqrt{\frac{\Lambda + h_0}{\Lambda - h_0} }\\ -\sqrt{\frac{\Lambda - h_0}{\Lambda + h_0} } & 0 \end{array}\right).
\end{equation}
Equations \eqref{sec3.2eq2} and \eqref{sec3.2eq8}--\eqref{sec3.2eq10} now yield the following differential equation
\begin{equation}\label{sec4.1eq6}
\begin{split}
 \frac{\partial X}{\partial y}e^{X-u_0t}&=e^{-\frac{y}{\sqrt{u_0}}}\frac{m_2^2(\Lambda +h_0)}{m_1^2 (\Lambda-h_0)}=e^{-\frac{y}{\sqrt{u_0}}}\left[\frac{\nu_1 |\lambda_1|e^{-\Omega(y,t)} +\nu_2(\Lambda+h_0)e^{\Omega(y,t)}}{\nu_1 |\lambda_1| e^{-\Omega(y,t)}-\nu_2 (\Lambda-h_0)e^{\Omega(y,t)}}\right]^2.
\end{split}
\end{equation}

Implementing the change of variables $(y,t)\to(-y,-t)$ and  observing that $\Omega(-y,-t)=-\Omega(y,t)$,  the differential equation for $X$ becomes
\begin{equation}\label{sec4.1eq6a}
\begin{split}
\frac{dX}{dy}e^{X-u_0t -\frac{y}{\sqrt{u_0}}}&=\left[\frac{s e^{\Omega(y,t)} +(\Lambda+h_0)e^{-\Omega(y,t)}}{s e^{\Omega(y,t)} -(\Lambda-h_0)e^{-\Omega(y,t)}}\right]^2,\qquad s:=\frac{\nu_1|\lambda_1|}{\nu_2}
\end{split}
\end{equation}
The reason for doing so is the following: The Camassa-Holm equation written in terms of the  $(y,t)$-variables yields the so-called ACH equation (see for instance \cite{RS,H99,PLA}), which is invariant under $(y,t)\to(-y,-t)$. Thus choosing  any solution of the ACH equation and imposing the change of variables $(y,t)\to(-y,-t)$, we obtain another solution once we determine $x=X(y,t)$. In other words, if $X(y,t)$ is a solution of the Camassa-Holm equation then so too is $x=\tilde{X}(y,t)=X(-y,-t)$. Moreover, with this change of variables we also have $x\to \frac{y}{\sqrt{u_0}}$ as $y\to \pm \infty$, cf. equation \eqref{y}.

Explicitly this change of variables imposes the following transformation on our differential equation for $X$
\begin{equation}\label{yto-y}
\begin{split}
 \frac{d\tilde{X}(-y,-t)}{d(-y)}e^{\tilde{X}(-y,-t)-u_0(-t)}&=e^{-\frac{y}{\sqrt{u_0}}}\left[\frac{\nu_1 |\lambda_1|e^{-\Omega(y,t)} +\nu_2(\Lambda+h_0)e^{\Omega(y,t)}}{\nu_1 |\lambda_1| e^{-\Omega(y,t)}-\nu_2 (\Lambda-h_0)e^{\Omega(y,t)}}\right]^2,\\
 \frac{dX(y,t)}{dy}e^{X(y,t)-u_0t}&=e^{\frac{y}{\sqrt{u_0}}}\left[\frac{\nu_1 |\lambda_1|e^{-\Omega(-y,-t)} +\nu_2(\Lambda+h_0)e^{\Omega(-y,-t)}}{\nu_1 |\lambda_1| e^{-\Omega(-y,-t)}-\nu_2 (\Lambda-h_0)e^{\Omega(-y,-t)}}\right]^2.
\end{split}
\end{equation}
Formally this may be integrated by separation of variables, however we may also look for a solution in the form
\begin{equation}\label{sec4.1eq7}
 X(y,t)=\frac{y}{\sqrt{u_0}}+u_0t+\ln\left \vert\frac{\mathcal{A}_{C}}{\mathcal{B}_{C}}\right \vert
\end{equation}
with
\begin{equation}\label{sec4.1eq8}
 \mathcal{A}_{C}=a_1e^{\Omega(y,t)}+a_2e^{-\Omega(y,t)}\qquad \mathcal{B}_{C}=s e^{\Omega(y,t)}- (\Lambda-h_0)e^{-\Omega(y,t)}.
\end{equation}
Replacing equations \eqref{sec4.1eq7}--\eqref{sec4.1eq8} in equation \eqref{sec4.1eq6}, we conclude that $a_1=s$, $a_2=-(h_0+\Lambda^{2})/(\Lambda-h_0)$ and thus
\begin{equation}
X(y,t)=\frac{y}{\sqrt{u_0}}+u_0 t+\ln\left\vert\frac{(h_0+\Lambda)^2e^{-\Omega(y,t)}
+\gamma e^{\Omega(y,t)}}{(h_0-\Lambda)^2e^{-\Omega(y,t)} +\gamma e^{\Omega(y,t)}}\right\vert , \qquad \gamma =-s(\Lambda-h_0).
\end{equation}
When $\gamma >0$ this expression can also be written as
\begin{equation}
X(y,t)=\frac{y}{\sqrt{u_0}}+u_0t+\ln\left\vert\frac{\frac{\gamma+(h_0+\Lambda)^2}{\gamma -(h_0+\Lambda)^2}\coth {\Omega(y,t)}+1}{\frac{\gamma +(h_0-\Lambda)^2,}{\gamma -(h_0-\Lambda)^2} \coth{\Omega(y,t)} +1}\right\vert + \text{const}.
\end{equation}
We introduce the constant
\[U=\frac{\Lambda}{h_0} >1\]
and  choose constants $\{\nu_1,\nu_2\}$ such that $\gamma=\lambda_1^2$, thereby simplifying the expression for $X(y,t)$, which is now given by
\begin{equation}
X(y,t)=\frac{y}{\sqrt{u_0}}+u_0t+\ln\left\vert\frac{U\coth {\Omega(y,t)}-1}
{U \coth{\Omega(y,t)} +1}\right\vert.
\end{equation}
having ignored a trivial additive constant.

We recall that $\lambda^2(k)=-\frac{1}{u_0}\left(k^2+\frac{1}{4}\right)$, while the discrete spectrum corresponds to $k=i\kappa_n$ with $\kappa_{n}\in(\frac{1}{2}, \infty),$ thus ensuring the discrete spectral parameter $\lambda_n$ is real.
Written in terms of the parameter $U=\frac{\Lambda(\lambda_1)}{h_0}>0$, the one-cuspon solution is now given by $u(X(y,t),t)=X_{t}(y,t)$ with
\begin{equation}\label{eq4.18}
\begin{cases}
X(y,t)=\frac{y}{\sqrt{u_0}}+u_0t+\ln\left\vert\frac{U\coth\Omega(y,t)-1}{U\coth\Omega(y,t)+1}\right\vert\\
u(X(y,t),t)=u_0+\frac{U^2\left(u_0-\frac{1}{2\lambda_1^2}\right)(1-\coth^2\Omega(y,t))}{1-U^2\coth^2\Omega(y,t)}\\
\Omega(y,t,\lambda_1)=\Lambda(\lambda_1)\left(y-\frac{1}{2h_0}\left(u_0-\frac{1}{2\lambda_1^2}\right)t\right)\\
\Lambda(\lambda_1)=\sqrt{h_0^2+\lambda_1^2}.
\end{cases}
\end{equation}
We can rewrite the second line of equation \eqref{eq4.18} as
$$ u-u_0=\left(u_0-\frac{1}{2\lambda_1^2}\right)\frac{U^2(1-\tanh^2 \Omega)}{U^2 - \tanh^2 \Omega} $$
and we note that $u>u_0$ when $u_0>\frac{1}{2\lambda_1^2}$ with the corresponding profile called an {\it a cuspon},  left-panel Figure \ref{fig0}. The case $u<u_0$ when $u_0<\frac{1}{2\lambda_1^2}$ is called an {\it anti-cuspon},  right-panel Figure \ref{fig0}.
\begin{figure}[h!]
\includegraphics[width=0.5\textwidth]{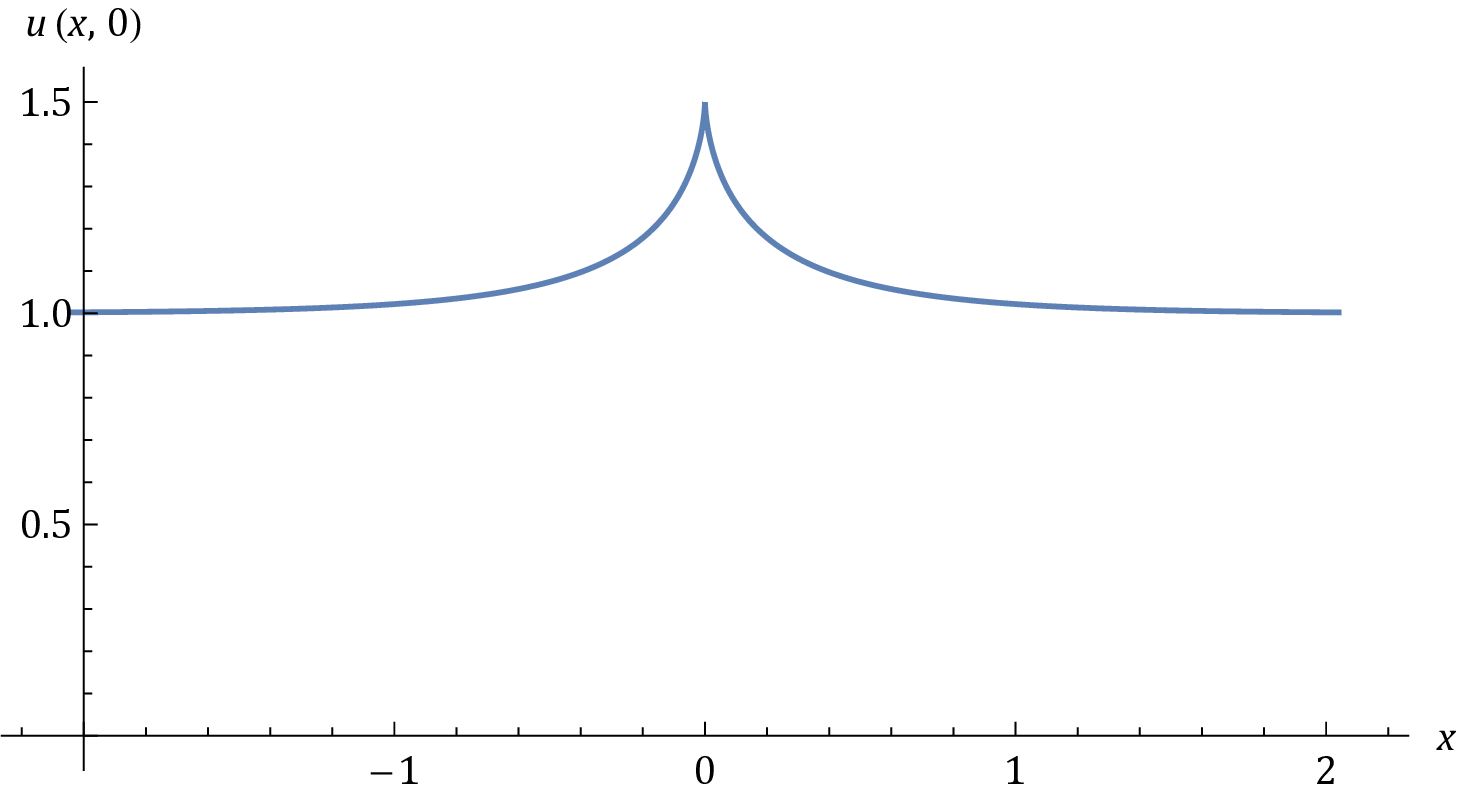}%
\includegraphics[width=0.5\textwidth]{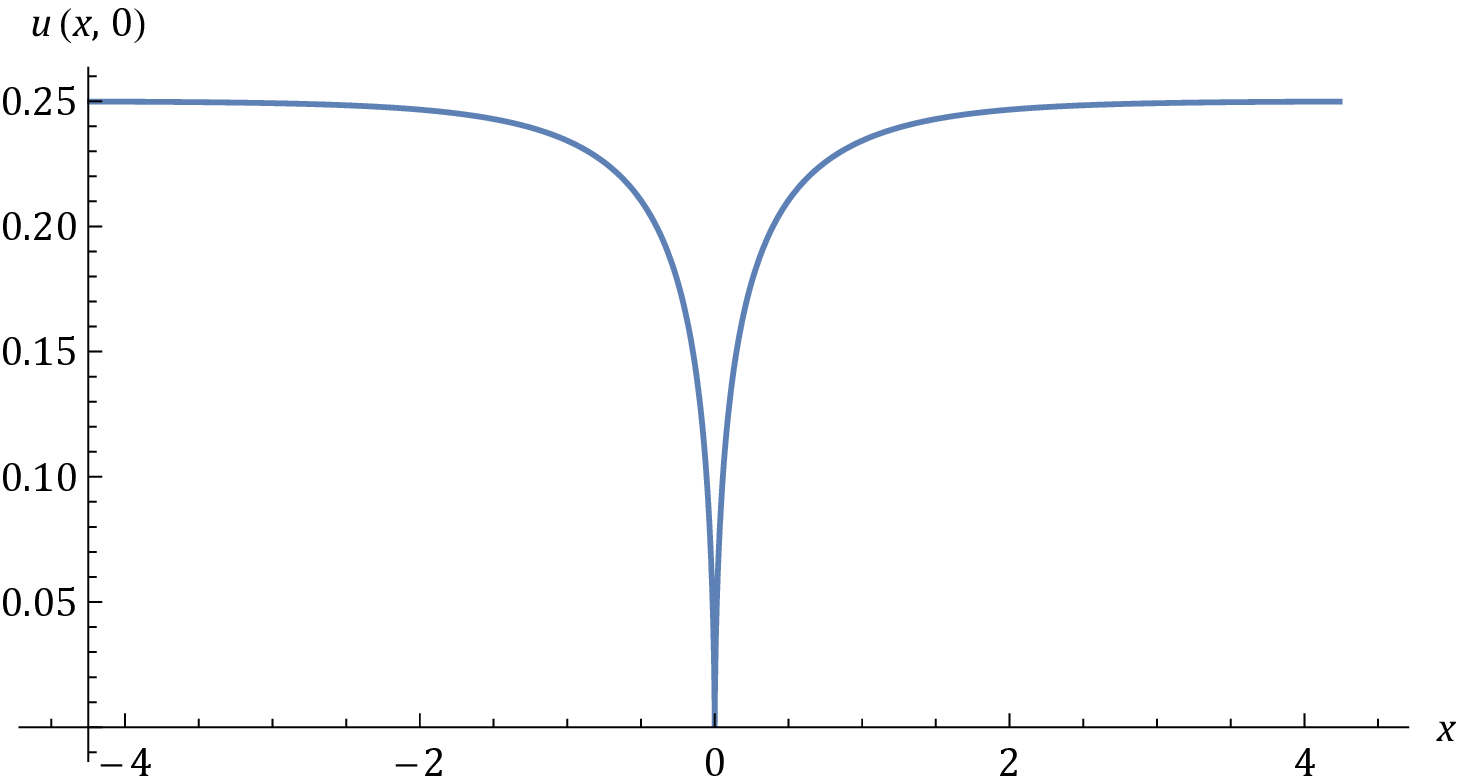}%
\caption{One-cuspon and one-anticuspon solution profiles of the Camassa-Holm equation, $\lambda_1=1.0.$ On the left panel $u_0=1.0,$ on the right panel $u_0=0.25.$ }\label{fig0}
\end{figure}

Let us now evaluate the slope $u_X$ of the cuspon profile to investigate the discontinuity at the cusp. We have
$$ X_y = \frac{(U^2-1)\sinh ^2 \Omega}{\sqrt{u_0}[U^2 + (U^2-1)\sinh^2 \Omega ]}$$ along with $\Lambda=Uh_0$ and $2h_0 \sqrt{u_0}=1$ and so we find
$$u_X(X(y,t),t)= \frac{1}{X_y} \frac{\partial u}{\partial y}=-\frac{U^3 \left( u_0 - \frac{1}{2\lambda_1^2}\right)}{U^2 + (U^2-1)\sinh^2 \Omega} \coth \Omega.$$
Thus, in the cuspon case ($u_0 > \frac{1}{2\lambda_1^2}$) we have $u_X \to  \infty$ when $\Omega \to 0^{-}$ left of the cusp at $\Omega=0$, and $u_X \to  -\infty$ when $\Omega \to 0^{+}$ right of the cusp. For the anti-cuspon the signs change in an obvious way.
The cusp is located at $\Omega=\Lambda(\lambda_1)\left(y-\frac{1}{2h_0}\left(u_0-\frac{1}{2\lambda_1^2}\right)t\right)=0$ and moves with a constant velocity $$\frac{dy}{dt}= \frac{1}{2h_0}\left(u_0-\frac{1}{2\lambda_1^2}\right)  $$ with respect to the $y$-axis. This can be both positive (for the cuspons) and negative (for the anti-cuspons). However since $y$ is merely a parameter, the velocity should be measured with respect to the physical $x$-axis. Noting that when $\Omega=0,$ $X(y,t)=\frac{y}{\sqrt{u_0}}+u_0t+$const, we find that the cuspon (anti-cuspon) velocity is
$$\frac{dx}{dt}=2 \left(u_0-\frac{1}{4\lambda_1^2}\right)  $$ thereby indicating that the threshold velocity is $u_0=\frac{1}{4\lambda_1^2}$. Thus, the cuspon solution is always right-moving, since $u_0>\frac{1}{2\lambda_1^2}>\frac{1}{4\lambda_1^2},$   while the anti-cuspon solution is either right-moving, when $\frac{1}{2\lambda_1^2}>u_0>\frac{1}{4\lambda_1^2}$
or left-moving, when $u_0<\frac{1}{4\lambda_1^2}$. The special case $u_0=\frac{1}{4\lambda_1^2}$ therefore corresponds to a ``standing'' anti-cuspon.

\subsection{The one-soliton solution}\label{sec4.2}
The one soliton solution can be obtained in a similar way by a dressing factor with a simple imaginary pole $i\omega_1$:\begin{equation}\label{g1sol}
g(y,t,\lambda) = \id +\frac{2i\omega_1 A_1 (y,t)}{\lambda-i\omega_1}.
\end{equation}
The details can be found in \cite{ILO2017}, and the solution is
\begin{equation}
\begin{cases}
&X(y,t)=2h_0y+u_0t+\ln\abs{\frac{U\tanh\Omega(y,t)-1}{U\tanh\Omega(y,t)+1}}\\
&u(X,t)  = u_0+ \frac{U^2\left(u_0+\frac{1}{2\omega^2}\right)(1-\tanh^2 \Omega)}{1-U^2 \tanh^2 \Omega}\\
&\Omega(y,t)=\Lambda y-\frac{U}{2}\left(u_0+\frac{1}{2\omega^2}\right)t\\
&\Lambda=\sqrt{h_0^2-\omega^2},
\end{cases}
\end{equation}
where as in the previous case the solution is obtained via $X_{t}(y,t)=u(X(y,t),t)$.
We note that as $y\to\pm\infty$, then $\tanh(y) \to \pm 1$ and $u\to u_0$, which we observe in the soliton profile shown in Figure \ref{fig1}.
Interestingly, choosing constants such that $c(h_0-\Lambda)=-\gamma^2<0$, $X(y,t)$ is no longer a monotonic function for all $y\in \mathbb{R}$ meaning the solution $u(x,t)$ becomes a function with discontinuities,  and the reader is referred to \cite{RS} for further discussion where such solutions are termed {\it unphysical}.
\begin{figure}[h!]
 \centering
 \includegraphics[width=0.5 \textwidth]{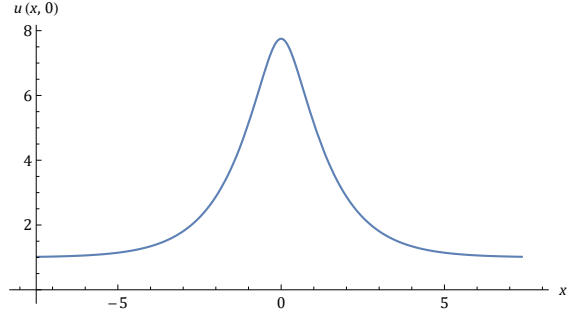}
 \caption{The one-soliton solution at $t=0$ where $u_0=1$ and $\omega=0.25$.}\label{fig1}
\end{figure}

\subsection{The two-cuspon solution}\label{sec4.3}
The dressing factor associated with the two-cuspon solution has two real simple poles  $\lambda_1$ and $\lambda_2$, with residues $2\lambda_k B_k$ ($k=1,2.$). Extending Proposition \ref{sec3.1prop1}, this dressing factor is of the form
\begin{equation}\label{sec4.2eq1}
g(y,t,\lambda) = \id+\frac{2\lambda_1 B_1 (y,t)}{\lambda-\lambda_1}+\frac{2\lambda_2 B_2 (y,t)}{\lambda-\lambda_2},
\end{equation}
where $B_1$ and $B_2$ are two matrix valued residues. The $\mathbb{Z}_2$ reduction $  \bar{g}(y,t,\bar{\lambda})= g(y,t,\lambda)  $ necessitates $B_k$ to be real.

Applying equation \eqref{sec3.1eq3} to the dressing factor $g$ as given by equation \eqref{sec4.2eq1} ensures the corresponding equations for the residues, namely
\begin{equation}\label{sec4.2eq3}
B_{k,y}+h \sigma_3 B_k - B_k h_0\sigma_3 - \lambda_k [J, B_k]=0 \quad \text{ for }k=1,2.
\end{equation}
The matrix valued residues $B_k$ are of the form
\begin{equation}\label{sec4.2eq4}
B_1 = \ket{n}\bra{m}, \qquad  B_2 = \ket{N}\bra{M},
\end{equation}
where the vectors $\ket{n}$, $\ket{N}$, $\bra{m}$ and $\bra{M}$ are found to satisfy
\begin{equation}\label{sec4.2eq5}
\begin{aligned}
&\partial_y \ket{n} + (h\sigma_3-\lambda_1 J)\ket{n}=0, \qquad \partial_y\bra{m}= \bra{m} (h_0\sigma_3-\lambda_1 J)\\
&\partial_y \ket{N} + (h\sigma_3-\lambda_2 J)\ket{N}=0, \qquad \partial_y\bra{M}= \bra{M} (h_0\sigma_3-\lambda_2 J).
\end{aligned}
\end{equation}
The vectors  $\bra{m},\bra{M}$ satisfy the bare equations and therefore are known in principle, while the reality condition can be satisfied by assuming $\ket{m}$, $\ket{M}$, $\ket{n}$, and $\ket{N}$ are all real. The reduction given in equation \eqref{gsym} leads to
\begin{equation}\label{sec4.2eq6}
 \left[\id + \frac{2\lambda_1 B_1}{\lambda-\lambda_1}+\frac{2\lambda_2 B_2}{\lambda- \lambda_2}\right]\left[\id - \frac{2\lambda_1J B_1^{T}J}{\lambda+ \lambda_1}-\frac{2\lambda_2 J B_2^{T}J}{\lambda+\lambda_2}\right]=\id,
\end{equation}
which is identically satisfied for all $\lambda.$ Thus, the residues obtained at $\lambda_1$ and $\lambda_2,$ ensure
\begin{equation}\label{sec4.2eq7}
\begin{split}
 B_1 \left(\id - J B_1^{T}J - \eta_2 J B_2^{T}J\right)& =0,\qquad \eta_2=\frac{2\lambda_2}{\lambda_1+\lambda_2},  \\  B_2 \left(\id -\eta_1 J B_1^{T}J -  J B_2^{T}J\right)& =0,\qquad \eta_1=\frac{2\lambda_1}{\lambda_1+\lambda_2}.
\end{split}
\end{equation}
Using equations \eqref{sec4.2eq4}-\eqref{sec4.2eq7} we obtain the following system relating the bare and dressed eigenvectors:
\begin{equation}\label{sec4.2eq8}
\begin{aligned}
 \bra{m}&=\bra{m} J \ket{m} \bra{n} J+ \eta_2 \bra{m} J \ket{M} \bra{N} J,\\
 \bra{M}&=\eta_1 \bra{M} J \ket{m} \bra{n} J+  \bra{M} J \ket{M} \bra{N}J.
\end{aligned}
\end{equation}
Hence, the dressed vectors $\ket{n}$ and $\ket{N}$ may be written explicitly in terms of the bare vectors $\ket{m}$ and $\ket{M}$ as follows
\begin{equation}\label{sec4.2eq9}
\begin{aligned}
 \ket{n}&=\frac{1}{\Delta}\big(\bra{M} J \ket{M} J\ket{m} - \eta_2 \bra{m} J \ket{M} J\ket{M} \big),\\
 \ket{N}&=\frac{1}{\Delta}\big(\bra{m} J \ket{m} J\ket{M} - \eta_1 \bra{M} J \ket{m} J\ket{m} \big),
\end{aligned}
\end{equation}
with
\begin{equation}\label{sec4.2eq10}
\begin{aligned}
\Delta&=\bra{M} J \ket{M}\bra{m} J \ket{m}-\eta_1 \eta_2\bra{m} J \ket{M}^2\\
      &=(\eta_1 m_2 M_1-\eta_2 m_1 M_2)(\eta_1 m_1 M_2 - \eta_2 m_2 M_1).
\end{aligned}
\end{equation}
Thus, the residues $B_k$ can be expressed in terms of components of the known vectors
\begin{equation}\label{mn}
\bra{m}=\bra{m_{(0)}}\Psi_0^{-1}(y,t,\lambda_1),\qquad  \bra{M}=\bra{M_{(0)}}\Psi_0^{-1}(y,t,\lambda_2),
\end{equation}
where $\bra{m_{(0)}},\bra{M_{(0)}} $ are arbitrary constant vectors.

The dressing factor  $g(\lambda)\in SL(2)$ (cf. equation \eqref{sec4.2eq1}), when evaluated at $\lambda=0$, is given by
\begin{equation}\label{sec4.2eq12}
\begin{aligned}
g(y,t;0)&=\id - 2(B_1+B_2)\\
        &=\mathrm{diag}(g_{11},g_{22})\\
        &=\mathrm{diag}\left(\frac{\lambda_1 M_1 m_2 -\lambda_2 M_2 m_1}{\lambda_1 M_2 m_1-\lambda_2 M_1 m_2},\frac{\lambda_1 M_2 m_1-\lambda_2 M_1 m_2}{\lambda_1 M_1 m_2 -\lambda_2 M_2 m_1}  \right),
\end{aligned}
\end{equation}
while the differential equation for $X(y,t)$ is of the form
\begin{equation}\label{sec4.2eq13}
(\partial_y X) e^{X-2h_0 y-u_0 t} =g_{22}^2=\left(\frac{\lambda_1 M_2 m_1-\lambda_2 M_1 m_2}{\lambda_1 M_1 m_2 -\lambda_2 M_2 m_1}\right)^2.
\end{equation}
In the case of the two-cuspon solution, equations \eqref{Psi0}-\eqref{L-V} now become
\begin{equation}
 \Psi(y,t,\lambda_k)=V_{k}e^{-\sigma_3\Omega_k}V_k^{-1}
\end{equation}
with
\begin{equation}
\left\{
\begin{aligned}
 &\Omega_k(y,t)=\Lambda_{k}\left(y-\frac{1}{2h_0}\left(u_0-\frac{1}{2\lambda_k^2}\right )t\right),\quad
 \Lambda_k=\sqrt{h_0^2+\lambda_k^2}>h_0,  \\
 &V_{k}=\left(\begin{matrix}\cos(\theta_k)&-\sin(\theta_k)\\ \sin(\theta_k)&\cos(\theta_k)\end{matrix}\right),\\
 &\cos(\theta_k)=\sqrt{\frac{\Lambda_k+h_0}{2\Lambda_k}}\qquad \sin(\theta_k)=\sqrt{\frac{\Lambda_k-h_0}{2\Lambda_k}}
 \end{aligned}
 \right.
\end{equation}
for $k=1, 2$. Using equation \eqref{mn} and noticing that $\bra{m_{(0)}}V_1= (\mu_1, \mu_2)$ is a constant vector, upon choosing $\mu_1, \mu_2$ to be real and positive then explicitly we have
\begin{equation*}
\begin{split}
  m_1&=\sqrt{h_0+\Lambda_1}\sqrt{\frac{\mu_1}{\mu_2}}e^{\Omega_1(y,t)}-\sqrt{\Lambda_1-h_0}\sqrt{\frac{\mu_2}{\mu_1}}e^{-\Omega_1(y,t)},\\ m_2&=\sqrt{\Lambda_1-h_0}\sqrt{\frac{\mu_1}{\mu_2}}e^{\Omega_1(y,t)}+\sqrt{\Lambda_1+h_0}\sqrt{\frac{\mu_2}{\mu_1}}
e^{-\Omega_1(y,t)} ,
 \end{split}
\end{equation*}
up to an overall constant foactor $\sqrt{\mu_1 \mu_2}(2\Lambda_1)^{-1/2}$, which we may neglect since it ultimately cancels, cf. equation \eqref{g2cuspon}.

We redefine $\Omega_1(y,t)$ by an additive constant, which becomes
\begin{equation}
    \Omega_1(y,t)=\Lambda_{1}\left(y-\frac{t}{2h_0}\left(u_0-\frac{1}{2\lambda_1^2}\right)\right)+ \ln \sqrt{\frac{\mu_1}{\mu_2}},
\end{equation}
whereupon the vector components $m_1$ and $m_2$ become
\begin{equation} \label{m12}
\begin{split}
  m_1&=\sqrt{\Lambda_1+h_0}e^{\Omega_1(y,t)}-\sqrt{\Lambda_1-h_0}e^{-\Omega_1(y,t)}\\
  m_2&=\sqrt{\Lambda_1-h_0}e^{\Omega_1(y,t)}+\sqrt{\Lambda_1+h_0}e^{-\Omega_1(y,t)}.
\end{split}
\end{equation}
Similarly, taking the constant vector $\bra{M_{(0)}}V_2= (\nu_1, - \nu_2)$ with $\nu_1, \nu_2$ real and positive, we have
\begin{equation} \label{M12}
\begin{split}
  M_1&=\sqrt{\Lambda_2+h_0}e^{\Omega_2(y,t)}+\sqrt{\Lambda_2-h_0}e^{-\Omega_2(y,t)},\\ M_2&=\sqrt{\Lambda_2-h_0}e^{\Omega_2(y,t)}-\sqrt{\Lambda_2+h_0}e^{-\Omega_2(y,t)},\\
 \Omega_2(y,t)&=\Lambda_{2}\left(y-\frac{t}{2h_0}\left(u_0-\frac{1}{2\lambda_2^2}\right)\right)+ \ln \sqrt{\frac{\nu_1}{\nu_2}}.
 \end{split}
\end{equation}
The dressing factor component $g_{22}$ now takes the form
\begin{equation}\label{g2cuspon}
g_{22}=\frac{\lambda_1 M_2 m_1-\lambda_2 M_1 m_2}{\lambda_1 M_1 m_2  -\lambda_2 M_2 m_1}=\frac{\mathcal{T}_{CC}}{\mathcal{B}_{CC}}
\end{equation}
whose explicit form we deduce from equations \eqref{m12}--\eqref{M12}. We note in particular that the denominator of this expression is given by
\begin{equation}
\begin{split}
\mathcal{B}_{CC}=&\lambda_1 \lambda_2 \left(\frac{\Lambda_1 - \Lambda_2}{\sqrt{(\Lambda_1-h_0)(\Lambda_2-h_0)}}e^{\Omega_1+\Omega_2}+ \frac{\Lambda_1 - \Lambda_2}{\sqrt{(h_0+\Lambda_1)(h_0+\Lambda_2)}}e^{-\Omega_1-\Omega_2}  \right. \\
  & + \left.  \frac{\Lambda_1 + \Lambda_2}{\sqrt{(\Lambda_1-h_0)(\Lambda_2+h_0)}}e^{\Omega_1-\Omega_2} + \frac{\Lambda_1 + \Lambda_2}{\sqrt{(\Lambda_1+h_0)(\Lambda_2-h_0)}}e^{-\Omega_1+\Omega_2} \right)
\end{split}
\end{equation}
and introducing the constants
\begin{equation}
n_k=\sqrt[4]{\frac{\Lambda_k+h_0}{\Lambda_k-h_0}}\quad \text{ for }k = 1, 2,
\end{equation}
we obtain
\begin{equation}
\mathcal{B}_{CC}=\sqrt{\lambda_1 \lambda_2} (\Lambda_1^2-\Lambda_2^2) \left(n_1 n_2\frac{e^{\Omega_1+\Omega_2}}{\Lambda_1+\Lambda_2} +\frac{1}{n_1 n_2}\frac{e^{-\Omega_1-\Omega_2}}{\Lambda_1+\Lambda_2}
+\frac{n_1}{n_2}\frac{e^{\Omega_1-\Omega_2}}{\Lambda_1-\Lambda_2}+
\frac{n_2}{n_1}\frac{e^{-\Omega_1+\Omega_2}}{\Lambda_1-\Lambda_2} \right).
\end{equation}
Similarly it is found that
\begin{equation}
\mathcal{T}_{CC}=\sqrt{\lambda_1 \lambda_2} (\Lambda_1^2-\Lambda_2^2) \left(\frac{1}{n_1 n_2}\frac{e^{\Omega_1+\Omega_2}}{\Lambda_1+\Lambda_2} +n_1 n_2\frac{e^{-\Omega_1-\Omega_2}}{\Lambda_1+\Lambda_2}
-\frac{n_2}{n_1}\frac{e^{\Omega_1-\Omega_2}}{\Lambda_1-\Lambda_2}-
\frac{n_1}{n_2}\frac{e^{-\Omega_1+\Omega_2}}{\Lambda_1-\Lambda_2} \right).
\end{equation}
We seek a solution of equation  \eqref{sec4.2eq13} in the form of equation \eqref{sec4.1eq7}

$$ X(y,t)=\frac{y}{\sqrt{u_0}}+u_0t+\ln \frac{\mathcal{A}_{CC}}{\mathcal{B}_{CC}}  $$
with
\begin{equation}
\mathcal{A}_{CC}=\alpha_1 e^{\Omega_1+\Omega_2}+\alpha_2 e^{-\Omega_1-\Omega_2}+\alpha_3 e^{\Omega_1-\Omega_2}+\alpha_4 e^{-\Omega_1+\Omega_2},
\end{equation}
for some as yet unknown constants $\left\{\alpha_l\right\}_{l=1}^{4}$. Equation \eqref{sec4.2eq13}  ensures that $\mathcal{A}_{CC}$ must satisfy the following
\begin{equation}
2h_0\mathcal{A}_{CC}\mathcal{B}_{CC}+\mathcal{B}_{CC}\partial_y\mathcal{A}_{CC}-\mathcal{A}_{CC}\partial_{y}\mathcal{B}_{CC}=2h_0\mathcal{T}_{CC}^2
\end{equation}
whose solution is given by
\begin{equation}
\mathcal{A}_{CC}=\sqrt{\lambda_1 \lambda_2} (\Lambda_1^2-\Lambda_2^2) \left(\frac{1}{n_1^3 n_2^3}\frac{e^{\Omega_1+\Omega_2}}{\Lambda_1+\Lambda_2}+n_1^3 n_2^3\frac{e^{-\Omega_1-\Omega_2}}{\Lambda_1+\Lambda_2}
+\frac{n_2^3}{n_1^3}\frac{e^{\Omega_1-\Omega_2}}{\Lambda_1-\Lambda_2}+
\frac{n_1^3}{n_2^3}\frac{e^{-\Omega_1+\Omega_2}}{\Lambda_1-\Lambda_2} \right).
\end{equation}
The ratio $\mathcal{A}_{CC}/\mathcal{B}_{CC}$ may be written as
\begin{equation}
\frac{\mathcal{A}_{CC}}{\mathcal{B}_{CC}}=n_1^4 n_2^4\frac{1+ \frac{1}{n_1^6 }\frac{\Lambda_1+\Lambda_2}{\Lambda_1-\Lambda_2}e^{2\Omega_1}+ \frac{1}{n_2^6 }\frac{\Lambda_1+\Lambda_2}{\Lambda_1-\Lambda_2}e^{2\Omega_2} +\frac{e^{2\Omega_1+2\Omega_2}}{n_1^6n_2^6} }{1+ n_1^2\frac{\Lambda_1+\Lambda_2}{\Lambda_1-\Lambda_2}e^{2\Omega_1}+ n_2^2 \frac{\Lambda_1+\Lambda_2}{\Lambda_1-\Lambda_2}e^{2\Omega_2}+ n_1^2n_2^2e^{2\Omega_1+2\Omega_2} }
\end{equation}
which we simplify by means of the following re-definitions:
\begin{equation} \label{Omega12}
\begin{split}
 \Omega_1(y,t)&=\Lambda_{1}\left(y-\frac{t}{2h_0}\left(u_0-\frac{1}{2\lambda_1^2}\right)\right)+ \ln \sqrt{\frac{\mu_1}{\mu_2}}-\ln n_1 + \frac{1}{2}\ln \frac{\Lambda_1+\Lambda_2}{\Lambda_1 - \Lambda_2},\\
 \Omega_2(y,t)&=\Lambda_{2}\left(y-\frac{t}{2h_0}\left(u_0-\frac{1}{2\lambda_2^2}\right)\right)+ \ln \sqrt{\frac{\nu_1}{\nu_2}}-\ln n_2 + \frac{1}{2}\ln \frac{\Lambda_1+\Lambda_2}{\Lambda_1 - \Lambda_2}.
 \end{split}
\end{equation}
Alternatively, these may be simply written as
\begin{equation} \label{Omegak}
 \Omega_k(y,t)=\Lambda_{k}\left(y-\frac{t}{2h_0}\left(u_0-\frac{1}{2\lambda_k^2}\right)\right)+\xi_k\quad\text{ for }k=1, 2,
\end{equation}
where the constants $\xi_k$ relate to the initial sepparation of the solitons.  It follows that

 $$ \frac{\mathcal{A}_{CC}}{\mathcal{B}_{CC}}=n_1^4 n_2^4\frac{1+ \frac{1}{n_1^4 }e^{2\Omega_1}+ \frac{1}{n_2^4 }e^{2\Omega_2} +\left(\frac{\Lambda_1-\Lambda_2}{\Lambda_1+\Lambda_2}\right)^2\frac{e^{2\Omega_1+2\Omega_2}}{n_1^4n_2^4} }{1+ n_1^4e^{2\Omega_1}+ n_2^4e^{2\Omega_2} + \left(\frac{\Lambda_1-\Lambda_2}{\Lambda_1+\Lambda_2}\right)^2n_1^4n_2^4e^{2\Omega_1+2\Omega_2} }  $$ and thus
\begin{equation} \label{solX}
\begin{cases}
&X(y,t)=\frac{y}{\sqrt{u_0}}+u_0t+\ln \left(\frac{1+ \frac{1}{n_1^4 }e^{2\Omega_1}+ \frac{1}{n_2^4 }e^{2\Omega_2} +\left(\frac{\Lambda_1-\Lambda_2}{\Lambda_1+\Lambda_2}\right)^2\frac{e^{2\Omega_1+2\Omega_2}}{n_1^4n_2^4} }{1+ n_1^4e^{2\Omega_1}+ n_2^4e^{2\Omega_2} + \left(\frac{\Lambda_1-\Lambda_2}{\Lambda_1+\Lambda_2}\right)^2n_1^4n_2^4e^{2\Omega_1+2\Omega_2} } \right)  +X_0\\
&u(X(y,t),t)=\frac{\partial X}{\partial t},
\end{cases}
\end{equation}
which yields a solution of a form similar to that found in \cite{M1}.  Here $X_0$ is an overall additive constant that appears due to the translational invariance of the problem.

Given the expression \eqref{Omegak}, we see that the phase velocity $\Lambda_k\left(u_0-\frac{1}{2\lambda_k^2}\right)$ may be both positive and negative, thereby ensuring cuspon solutions may be left-moving or right-moving. In Figure \ref{fig3-1a} we show snapshots of a two cuspon solution with both cuspons right moving.
Such a solution arises when $u_{0}>\frac{1}{2\lambda_{\min}^2}$ where $\lambda_{\min}=\min\left\{\lambda_{1},\lambda_2\right\}$.
Conversely, we may have a two-cuspon solution with both cuspons left-moving when $u_{0}<\frac{1}{2\lambda_{\max}^2}$ with $\lambda_{\max}=\max\left\{\lambda_{1},\lambda_2\right\}$, and an example of such a solution is shown in Figure \ref{fig3-1b}. The third category of two-cuspon solution is a mixture of both, that is to say, when one cuspon is right moving while the other cuspon is left moving. Such a solution occurs when the asymptotic value $u_0$ is constrained by
$\frac{1}{2\lambda_{\max}}<u_0<\frac{1}{2\lambda_{\min}}$ and an example of such a solution is illustrated in Figure \ref{fig3-1c}.
\begin{figure}[h!]
\centering
\includegraphics[width=0.7\textwidth]{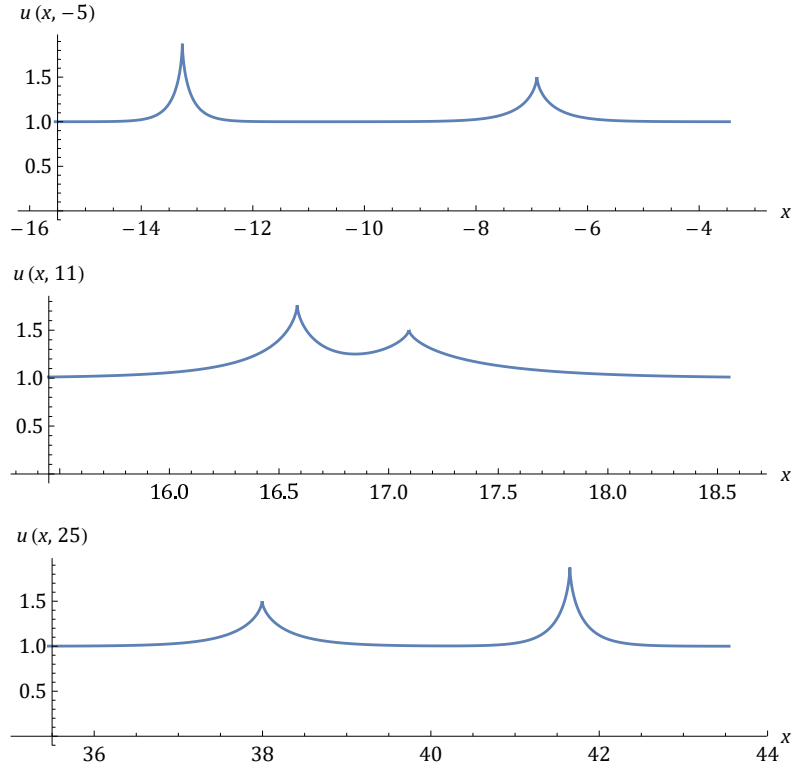}
\caption{Snapshots of a two-cuspon solution. Both cuspons are right-moving, $\lambda_1=1.0,$ $\lambda_2 = 2.0,$ $u_0=1.0, $  $\xi_1=0.0,$ $\xi_2=5.0.$}\label{fig3-1a}
\end{figure}
\begin{figure}[h!]
\centering
\includegraphics[width=0.7\textwidth]{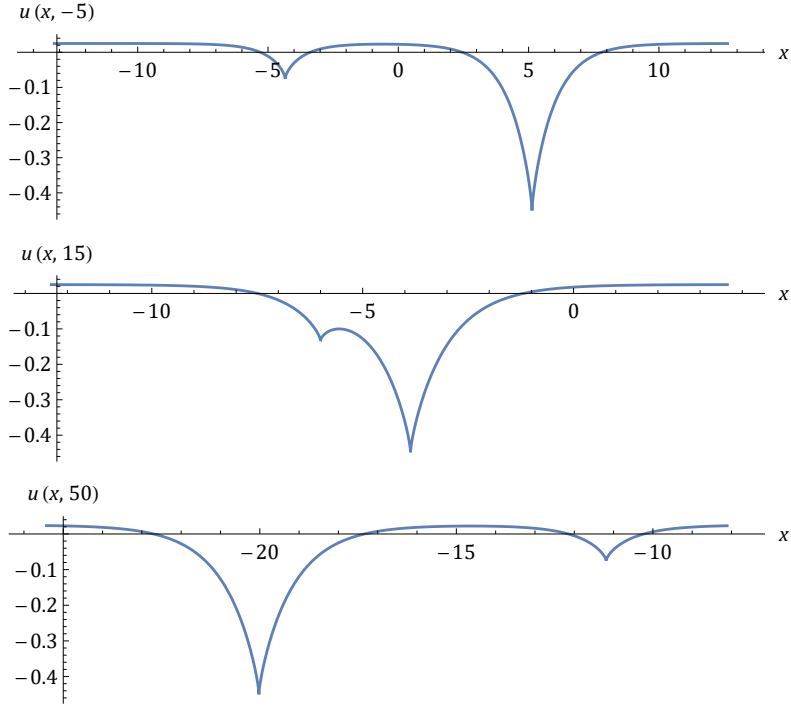}
\caption{Snapshots of a two-cuspon solution. Both cuspons are left moving, $\lambda_1=1.0,$ $\lambda_2 = 2.0,$ $u_0=0.1, $  $\xi_1=0.0,$ $\xi_2=10.0.$ }\label{fig3-1b}
\end{figure}
\begin{figure}[h!]
\centering
\includegraphics[width=0.7\textwidth]{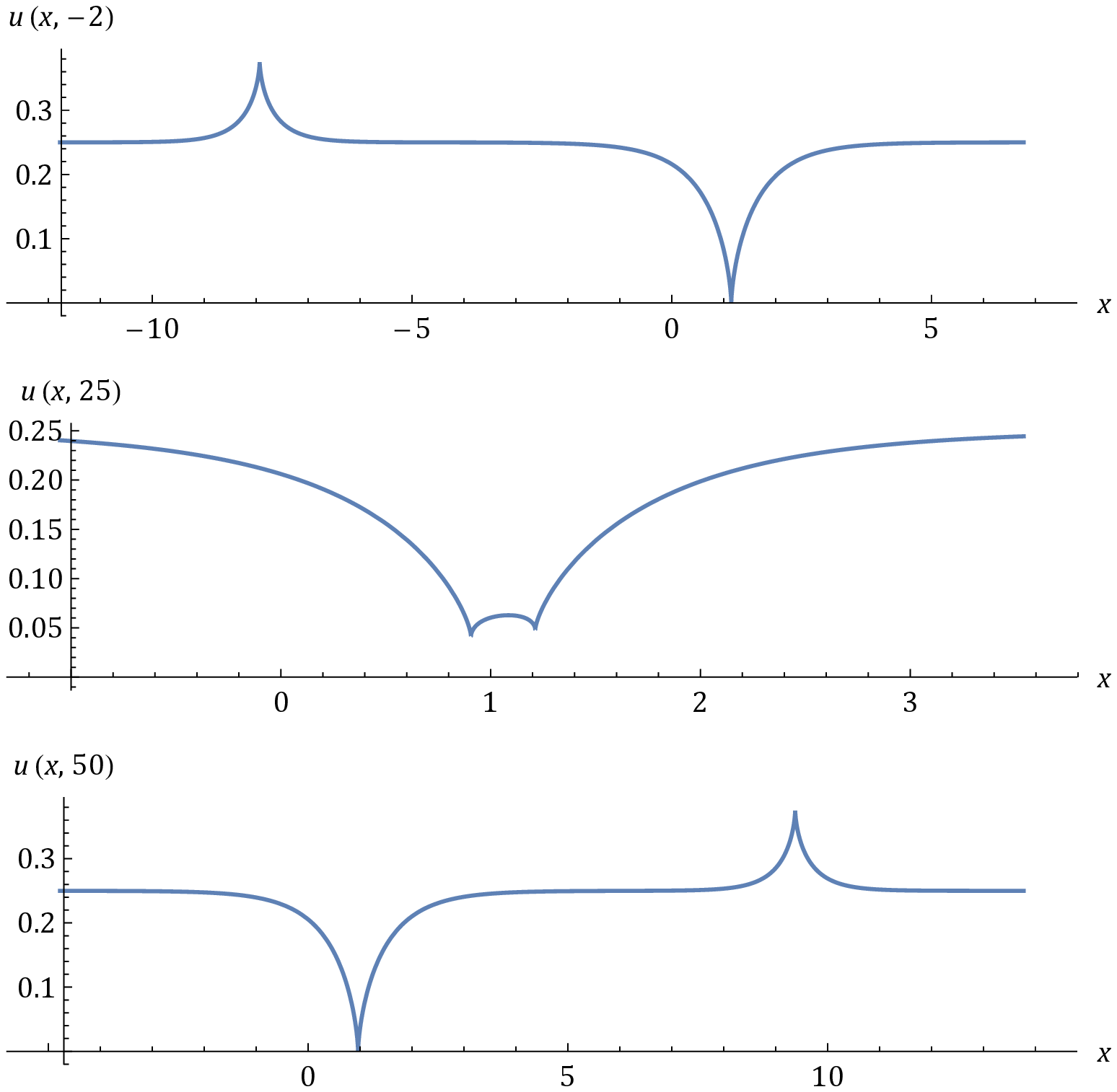}
\caption{Snapshots of a two-cuspon solution. One cuspon is left moving and the other cuspon is right moving, $\lambda_1=1.0,$ $\lambda_2 = 2.0,$ $u_0=0.25, $  $\xi_1=0.0,$ $\xi_2=10.0.$}\label{fig3-1c}
\end{figure}

\subsection{The two-soliton solution}\label{sec4.4}
As with the one-soliton solution, the dressing factor for the two-soliton solution has simple poles at the imaginary discrete eigenvalues which we denote $i\omega_1$ and $i\omega_2$, with residues $2i\omega_k A_k$ ($k=1,2$). Extending Proposition \ref{sec3.1prop1}, we have
\begin{equation}\label{sec4.4eq1}
g(y,t,\lambda) = \id+\frac{2i\omega_1 A_1 (y,t)}{\lambda-i\omega_1}+\frac{2i\omega_2 A_2 (y,t)}{\lambda-i\omega_2}.
\end{equation}
The $\mathbb{Z}_2$ reduction $ \sigma_3 \bar{g}(y,t,-\bar{\lambda})\sigma_3= g(y,t,\lambda)  $ necessitates
\begin{equation}\label{sec4.2eq2}
\sigma_3 \bar{A}_k (y,t)\sigma_3= A_k(y,t), \qquad k=1,2.
\end{equation}
The detailed computations of the two-soliton solution for the system \eqref{eq:ch} by the dressing method outlined here can be found in \cite{ILO2017}. The solution formally has the same form \eqref{solX} where this time the parameters $n_1$ and $n_2$ are given by
\begin{equation}
 n_{k}=\sqrt[4]{\frac{h_0+\Lambda_{k}}{h_0-\Lambda_k}},\quad\text{for }k=1,2.
\end{equation}
with $\Lambda_{k}=\sqrt{h_0^2-\omega_k^2}$ for $k=1, 2$. The spectral parameters $\omega_1$ and $\omega_2$ belong to the discrete spectrum of the spectral problem \eqref{ZS}, and to ensure each $\Lambda_k$ is real, we must have $\omega_k\in(0,h_0)$. Moreover, the phase of each soliton is given by
\begin{equation} \label{Omegak2}
 \Omega_k(y,t)=\Lambda_{k}\left(y-\frac{t}{2h_0}\left(u_0+\frac{1}{2\omega_k^2}\right)\right)+\xi_k\quad \text{for }k=1,2,
\end{equation}
where the constants $\xi_k$ are related to the initial separation of the solitons. The two-soliton interaction is illustrated in Figure \ref{fig2} below.
\begin{figure}[h!]
\centering
\includegraphics[scale=0.7]{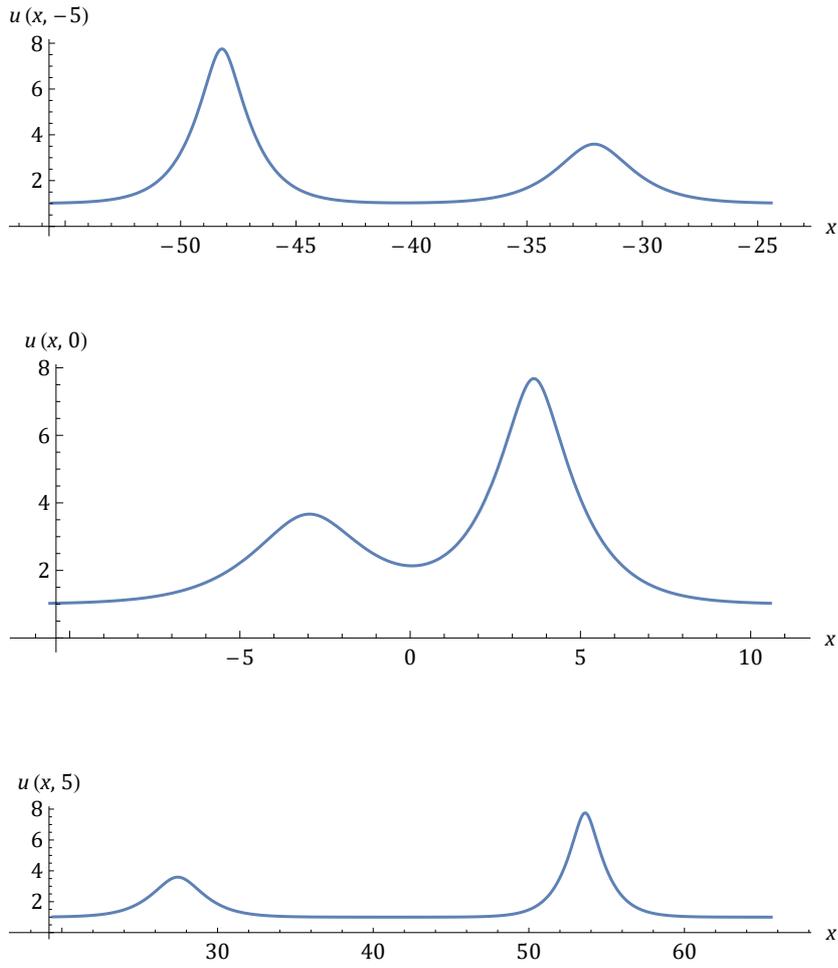}\\
\vspace{2cm}
\caption{Snapshots of the two soliton solution of the Camassa-Holm equation \eqref{eq:ch}, for three values of $t\in\{-5, 0, 5\}$. The other parameters are $u_0=1$, $\omega_1=0.35$ and $\omega_2=0.25$. The constants of integration were chosen as $\xi_1=0$ and  $\xi_2=2.0$.}\label{fig2}
\end{figure}

\subsection{The cuspon-soliton interaction}\label{sec4.5}
The dressing factor associated with the cuspon-soliton interaction has two simple poles, one imaginary pole at $\lambda=i\omega_1$ and one real pole at  $\lambda=\lambda_2$, as follows
\begin{equation}\label{sec7.2eq1}
g(y,t,\lambda) = \id+\frac{2i\omega_1 A_1 (y,t)}{\lambda-i\omega_1}+\frac{2\lambda_2 B_2 (y,t)}{\lambda-\lambda_2}.
\end{equation}
Following our previous results, the residue $B_2$ is required to be real.

The details are provided in the Appendix. The solution is formally given by the expression \eqref{solX} where $\Lambda_1=\sqrt{h_0^2-\omega_1^2}$ and $\Lambda_2=\sqrt{h_0^2+\lambda_2^2}$ along with
\begin{equation} \label{n1n2cs22}
n_1=\sqrt[4]{\frac{h_0+\Lambda_1}{h_0-\Lambda_1}},\qquad n_2=\sqrt[4]{\frac{\Lambda_2+h_0}{\Lambda_2-h_0}}.
\end{equation}
The associated phases are given by
\begin{equation} \label{Omegak77}
\begin{split}
 \Omega_1(y,t)&=\Lambda_{1}\left(y-\frac{t}{2h_0}\left(u_0+\frac{1}{2\omega_1^2}\right)\right)+\xi_1\\
 \Omega_2(y,t)&=\Lambda_{2}\left(y-\frac{t}{2h_0}\left(u_0-\frac{1}{2\lambda_2^2}\right)\right)+\xi_2.
\end{split}
\end{equation}

\begin{figure}[t!]
\centering
\includegraphics[width=0.7\textwidth,keepaspectratio]{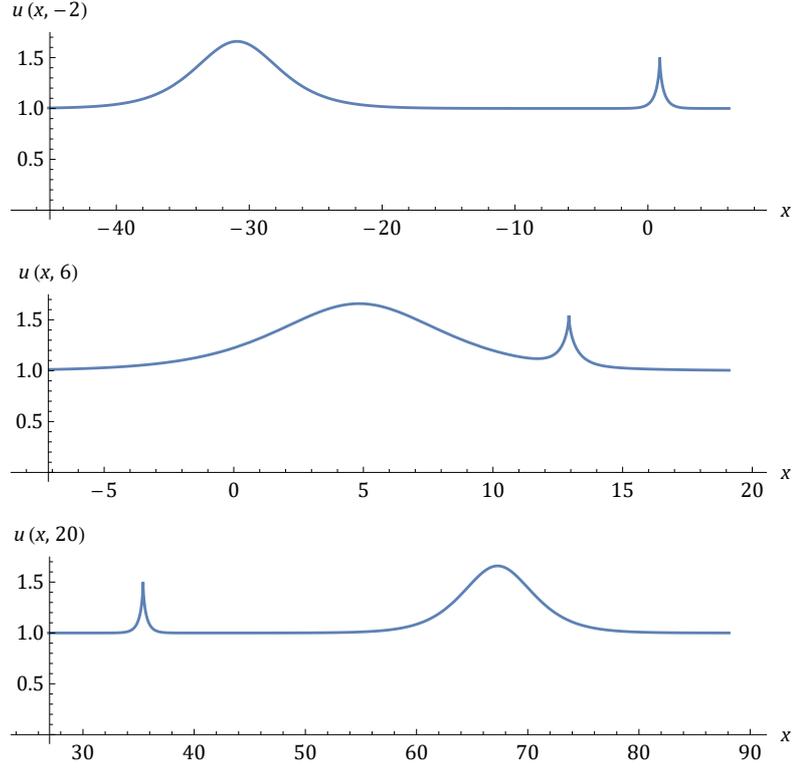}\\
\vspace{2cm}
\caption{The soliton-cuspon solution with parameters  $\lambda_1=1$, $\omega_1=0.45$, $u_0=\frac{1}{\lambda_1^2}=1.0$, $\xi_1=5$ and $\xi_2=-5$.}\label{fig4-1}
\end{figure}
\begin{figure}[t!]
\centering
\includegraphics[width=0.7\textwidth,keepaspectratio]{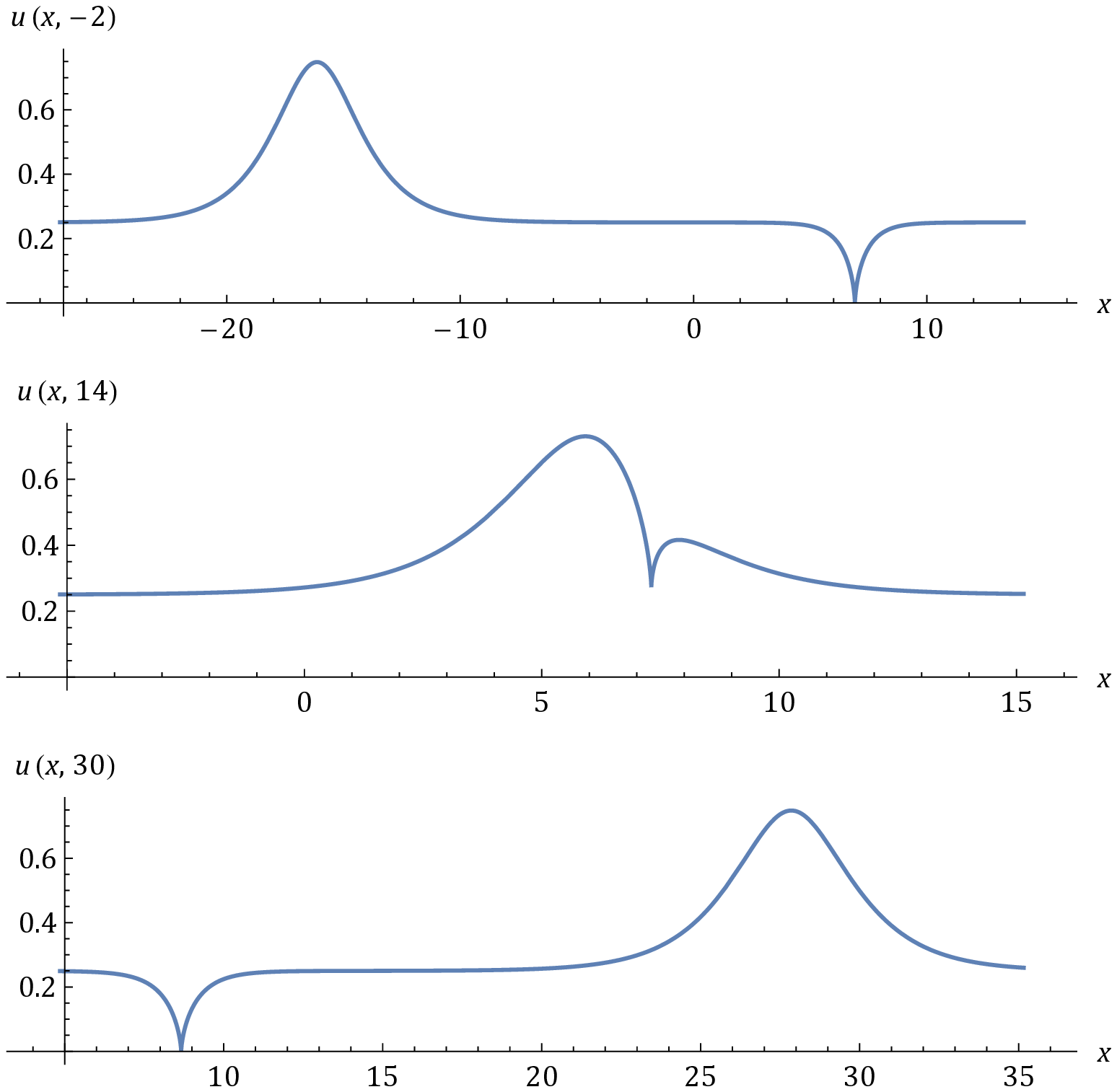}\\
\vspace{1cm}
\caption{The soliton-anticuspon solution with parameters  $\lambda_1=1$, $\omega_1=0.75$, $u_0=\frac{1}{4\lambda_1^2}=0.25$, $\xi_1=5$ and $\xi_2=-5$.}\label{fig4-2}
\end{figure}

Again, we note that the cuspon phase velocity may be either positive or negative, while the soliton phase velocity is strictly positive. Thus, the cuspon-soliton interaction may arise in two forms, namely a right moving soliton interacting with either left or right moving cuspon. Both scenarios are presented in the following: In Figure \ref{fig4-1} we present a cuspon-soliton interaction wherein both the cuspon and soliton are moving to the right. In Figure \ref{fig4-2} a cuspon-soliton interaction is shown in which the solition is right moving while the cuspon moves to the left.

\subsection{The general solution with multiple solitons and cuspons}

Now it is clear that the dressing factor for a solution with $N_1$ solitons and $N_2$ cuspons ($N=N_1+N_2$) has the form

$$ g(y,t,\lambda) = \id+\sum_{j=1}^{N_1}\frac{2i\omega_j A_j (y,t)}{\lambda-i\omega_j}+\sum_{j=1}^{N_2} \frac{2\lambda_j B_j (y,t)}{\lambda-\lambda_j}. $$
Formally these solutions are always of the form of the $ N$-soliton solution
\begin{equation} \label{XX}
X(y,t)=\frac{y}{\sqrt{u_0}}+u_0t+\ln\left \vert\frac{f_+}{f_-}\right \vert
\end{equation}
with
\begin{equation}
f_{\pm}\equiv \sum_{\sigma =0,1}\exp \left[\sum_{i=1}^{N}\sigma_i(2\Omega_i\mp\phi_i) + \!\!\sum_{1\le i<j \le N} \sigma_i \sigma_j \gamma_{i j}\right],
\end{equation}
where we introduce
\begin{equation}\label{Omega_k}
\Omega_j= \begin{cases}
&\Lambda_{j}\left(y-\frac{t}{2h_0}\left(u_0+\frac{1}{2\omega_j^2}\right)\right)+\xi_j\quad \text{for a soliton},\\
&\Lambda_{j}\left(y-\frac{t}{2h_0}\left(u_0-\frac{1}{2\lambda_j^2}\right)\right)+\xi_j \quad \text{for a cuspon}.
\end{cases}
\end{equation}
and
\begin{equation}\label{Lambda_k}
\Lambda_j=\left\{
\begin{aligned}
&\sqrt{h_0^2 - \omega_j^2}\quad \text{for a soliton},\\
&\sqrt{h_0^2+\lambda_j^2} \quad \text{for a cuspon}.
\end{aligned}
\right.
\end{equation}
and
\begin{equation}\label{n_k}
n_j=\left\{
\begin{aligned}
&\sqrt[4]{\frac{h_0+\Lambda_j}{h_0-\Lambda_j}}\quad \text{for a soliton},\\
&\sqrt[4]{\frac{\Lambda_j+h_0}{\Lambda_j-h_0}} \quad \text{for a cuspon}.
\end{aligned}
\right.
\end{equation}
along with
\begin{equation}
\phi_j = \ln (n_j)^4\qquad \gamma_{ij}= \ln\left(\frac{\Lambda_i-\Lambda_j}{\Lambda_i+\Lambda_j}\right)^2\label{gamma ij}.
\end{equation}

\begin{example}
To illustrate this generalisation we construct the soliton-cuspon-anticuspon solution as showin in Figure \ref{fig5}.
\end{example}
\begin{figure}[h!]
\centering
\includegraphics[width=0.7\textwidth]{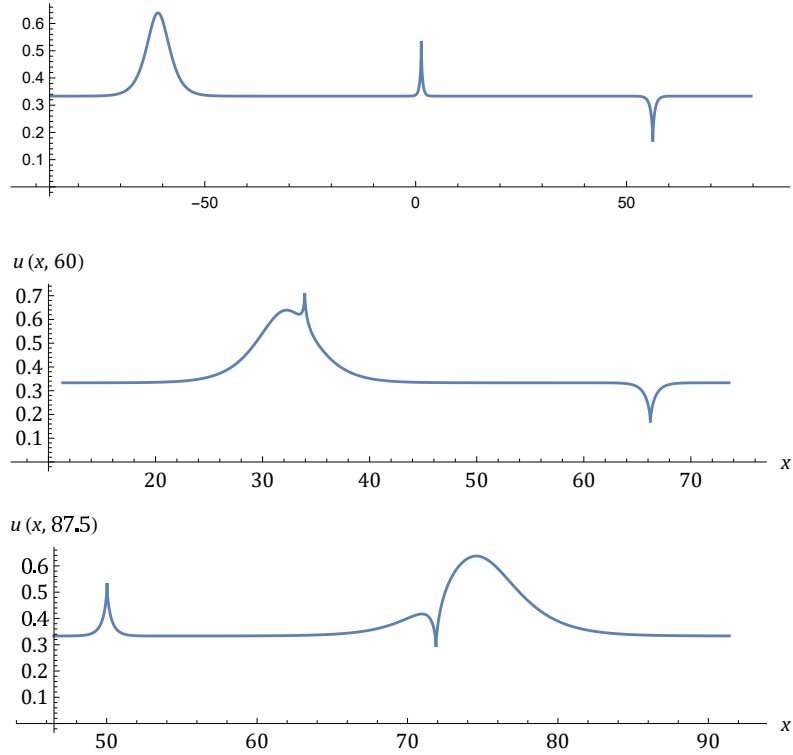}\\
\vspace{1cm}
\caption{The soliton-cuspon-anticuspon soltion with $\omega_1=0.75$, $\lambda_1=1.0$, $\lambda_2=1.95$, $u_0=\frac{1}{3\lambda_1^2}=0.33$, $\xi_1=15$, $\xi_2=-45$ and $\xi_3=-5$.}\label{fig5}
\end{figure}
This solution is explicitly construced from equation \eqref{XX} where we let $N=3$ and we find that
\begin{equation}
\begin{split}
f_+ &=  1+ \left(\sum_{k=1}^{3}\frac{1}{n_k^4 }e^{2\Omega_k}\right)+ \left(\frac{\Lambda_1-\Lambda_2}{\Lambda_1+\Lambda_2}\right)^2\frac{e^{2\Omega_1+2\Omega_2}}{n_1^4n_2^4}
\\& +
\left(\frac{\Lambda_1-\Lambda_3}{\Lambda_1+\Lambda_3}\right)^2\frac{e^{2\Omega_1+2\Omega_3}}{n_1^4n_3^4}
+
\left(\frac{\Lambda_2-\Lambda_3}{\Lambda_2+\Lambda_3}\right)^2\frac{e^{2\Omega_2+2\Omega_3}}{n_2^4n_3^4}\\
&+\left(\frac{\Lambda_1-\Lambda_2}{\Lambda_1+\Lambda_2}\right)^2\left(\frac{\Lambda_1-\Lambda_3}{\Lambda_1+\Lambda_3}\right)^2
\left(\frac{\Lambda_2-\Lambda_3}{\Lambda_2+\Lambda_3}\right)^2\frac{e^{2\Omega_1+2\Omega_2+2\Omega_3}}{n_1^4n_2^4n_3^4} \\
f_- &=  1+\left(\sum_{k=1}^{3}{n_k^4 }e^{2\Omega_k}\right) +  \left(\frac{\Lambda_1-\Lambda_2}{\Lambda_1+\Lambda_2}\right)^2 n_1^4n_2^4e^{2\Omega_1+2\Omega_2}  \\ & +
\left(\frac{\Lambda_1-\Lambda_3}{\Lambda_1+\Lambda_3}\right)^2  n_1^4n_3^4e^{2\Omega_1+2\Omega_3}
+\left(\frac{\Lambda_2-\Lambda_3}{\Lambda_2+\Lambda_3}\right)^2 n_2^4n_3^4e^{2\Omega_2+2\Omega_3} \\& +\left(\frac{\Lambda_1-\Lambda_2}{\Lambda_1+\Lambda_2}\right)^2\left(\frac{\Lambda_1-\Lambda_3}{\Lambda_1+\Lambda_3}\right)^2 \left(\frac{\Lambda_2-\Lambda_3}{\Lambda_2+\Lambda_3}\right)^2n_1^4n_2^4n_3^4e^{2\Omega_1+2\Omega_2+2\Omega_3} .
 \end{split}
\end{equation}

\section{The geometry of the soliton and cuspon solutions}
It is well known that the CH equation is closely related to the group of the diffeomorphisms of the real line, cf. \cite{HMR98,CK03}. We are going to explore this relation in some more detail, making use of the variables that we have already evaluated explicitly for the soliton and cuspon solutions.  Let us consider a local chart parameterisation of the Lie group $\mathcal{G}\simeq\mathrm{Diff}(\mathbb{R})$ given  by the coordinate $X.$ Then $dX$ is the basis for the co-tangent bundle $T^*\mathrm{Diff}(\mathbb{R})$, $\mathcal{L}=PdX$ is a 1-form, and $\omega=d \mathcal{L} = dP \wedge dX$ is the standard symplectic form. Thus, if $(X,P)$ are the canonical Hamiltonian variables, then $(dX,dP)$ are the canonical local coordinates on the phase space $T^{*}{\mathcal{G}}$.  The action of $\mathcal{G}$ in coordinate form is
\[g(t)y =X_{t}(y,t)=u(X(y,t),t)=u\circ X(y,t)=(u\circ g)y.\]
Thus $u=g_{t}g^{-1}\in\mathfrak{g}$, where $\mathfrak{g}=Vect(\mathbb{R})$, is the Lie algebra of vector fields of the form $u\partial_{x} $.

Now we recall the following result:
\begin{theorem}\cite{K81,K93}
The dual space of $\mathfrak{g}$ is a space of distributions but
the subspace of local functionals, called the regular dual
$\mathfrak{g}^*$,  is naturally identified with the space of
quadratic differentials $q(x)dx^2$ on $\mathbb{R}$. The pairing
is given for any vector field $u\partial_x\in\text{Vect}(\mathbb{R})$ by $$\langle qdx^2, u\partial_x\rangle=\int_{\mathbb{R}}q(x)u(x)dx$$. The coadjoint action coincides with the action of a diffeomorphism on the quadratic differential:
\[\text{Ad}_g^*:\quad q(y,0)dy^2\mapsto q(X,t)dX^2=q(X(y,t),t)X_y^2dy^2.\]
\end{theorem}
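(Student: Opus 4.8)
This is a classical result of Kirillov \cite{K81,K93}, so the plan is not to prove something new but to explain how it is obtained by unwinding the definitions of the adjoint and coadjoint actions together with a single change of variables. First I would record the data on the algebra side: $\mathfrak{g}=\mathrm{Vect}(\mathbb{R})$ carries the bracket $[u\partial_x,v\partial_x]=(uv_x-vu_x)\partial_x$, and for a diffeomorphism $g\in\mathcal{G}$ the adjoint action is the push-forward of a vector field, $\mathrm{Ad}_g(v\partial_x)=\big((g'\circ g^{-1})\,(v\circ g^{-1})\big)\partial_x$. On the dual side, the full topological dual of $\mathfrak{g}$ consists of distributions, and I would single out the \emph{regular} functionals, namely those of the form $u\partial_x\mapsto\int_{\mathbb{R}}q(x)u(x)\,dx$ for $q$ in a suitable function space.

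The heart of the argument is to show that this pairing is canonically attached to a quadratic differential $q\,dx^2$ rather than to the bare symbol $q$: contracting $q\,dx^2$ once with $u\partial_x$ produces the $1$-form (density) $qu\,dx$, and I would check that under any change of chart $x=X(y)$ one has $q\,dx^2=q(X(y))X_y^2\,dy^2$ and $u\,\partial_x=(u(X(y))/X_y)\,\partial_y$, so that substituting into $\int qu\,dx$ and changing variables returns the same number. This establishes the chart-independent identification $\mathfrak{g}^*\simeq\{\,q\,dx^2\,\}$ as a $\mathcal{G}$-space. With that in hand the coadjoint action is read off directly from $\langle\mathrm{Ad}_g^*\mu,V\rangle=\langle\mu,\mathrm{Ad}_{g^{-1}}V\rangle$: inserting $\mu=q\,dx^2$, $V=u\partial_x$ and the push-forward formula, then performing the substitution $x\mapsto g(x)$ in the integral, shows that $\mathrm{Ad}_g^*(q\,dx^2)$ is again regular and coincides with the geometric action of $g$ on the quadratic differential. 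Specialising to the one-parameter family $g=g(t)$ with $g(0)=\mathrm{id}$ and $X(y,t)=g(t)\,y$, and writing the transported momentum as $q(\cdot,t)$, this reads
\[
\mathrm{Ad}^*_{g(t)}:\quad q(y,0)\,dy^2\ \longmapsto\ q(X,t)\,dX^2=q(X(y,t),t)\,X_y^2\,dy^2,
\]
which is the assertion.

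The computations here are routine once the framework is fixed; the only genuine obstacles are foundational. First, one must choose function spaces (and a topology on $\mathcal{G}$) in which ``regular dual'' is well posed and the pairing separates points --- this is precisely the analytic content carried by \cite{K81,K93} and I would simply invoke it. Second, one must commit to a sign convention, since $\mathrm{Ad}^*$ versus $(\mathrm{Ad}^{-1})^*$ decides whether the quadratic differential is pushed forward or pulled back; this affects none of the displayed formulas because they are written for $g(0)=\mathrm{id}$. I expect the change-of-variables identity for the pairing to be the linchpin of the whole argument --- it is what makes the identification with quadratic differentials natural, and everything after it is bookkeeping.
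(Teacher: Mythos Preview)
Your proposal is a correct and standard derivation of Kirillov's result. Note, however, that the paper does not actually prove this theorem: it is stated with a citation to \cite{K81,K93} and used as a black box, with the subsequent computation \eqref{coad} simply applying the coadjoint formula to recover the Camassa--Holm equation. So there is no ``paper's own proof'' to compare against --- your sketch supplies precisely the argument the paper chose to outsource to the literature, and it does so along the expected lines (change of variables to identify regular functionals with quadratic differentials, then read off $\mathrm{Ad}_g^*$ from the defining pairing identity).
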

We therefore have
\begin{equation}\label{coad}
\begin{aligned}
\frac{\mathrm{d}}{\mathrm{d}t} Ad_{g(t)}^{*}q(0)&=\frac{\mathrm{d}}{\mathrm{d}t}  \left(X_y^2 q(X(y,t),t) \right)\\
                                                &=X_y^2(2u_X q(X,t)+u q_X+q_t)=X_y^2\left[(2u_x q+u q_x+q_t)\circ g\right] (y)=0,
\end{aligned}\end{equation}
iff $q$ satisfies the Camassa-Holm equation \eqref{eq:ch}.
In order to establish the relationship with the Hamiltonian variables, first we notice that
\[\frac{\partial u(X(y,t),t)}{\partial y}= u_X(X(y,t),t)X_y\]
and also
\[\frac{\partial u(X(y,t),t)}{\partial y}= \frac{\partial X_t}{\partial y}=X_{t y}.\]
Thus $u_X(X(y,t),t)=X_{t y}/X_y$. Similarly, $$u_{XX}(X(y,t),t)=\frac{1}{X_{y}}\left
(\frac{X_{t y}}{X_y}\right)_y$$ and $$
q(X(t,y),t)=u(X(t,y),t)-u_{XX}(X(t,y),t)=X_t-\frac{1}{X_{y}}\left
(\frac{X_{t y}}{X_y}\right)_y;$$

\begin{equation} \label{mPX}
q(x,t)=\int_{\mathbb{R}}P(y,t)\delta(x-X(y,t)) dy
\end{equation}
with
\begin{equation}\label{P(y,t)}
P(y,t)=X_t X_y-\left(\frac{X_{t y}}{X_y} \right)_y,
\end{equation}
and
\begin{equation}\label{uPX}
u(x,t)=\frac{1}{2}\int_{\mathbb{R}}  G(x-X(y,t)) P(y,t) d y.
\end{equation}
where $G(x)\equiv  \frac{1}{2}e^{-|x|}$ is the Green function of the operator $1-\partial_x^2$ and $(X(y,t),P(y,t))$ are quantities well
defined in terms of the scattering data.

With a substitution of (\ref{mPX}) and (\ref{uPX}) into the Camassa-Holm equation
(\ref{eq:ch}) and using the fact that
\begin{equation*}
f(x)\delta'(x-x_0)=f(x_0)\delta'(x-x_0)-f'(x_0)\delta(x-x_0)
\end{equation*}
we derive a system of integral equations for $X$ and $P$, namely
\begin{align}
\label{eq10}X_t(y,t)&=\int_{\mathbb{R}} G(X(y,t)-X(\underline{y},t))P(\underline{y},t)\text{d}\underline{y},\\
\label{eq11}P_t(y,t)&=-\int_{\mathbb{R}} G'(X(y,t)-X(\underline{y},t))P(y,t)P(\underline{y},t)\text{d}\underline{y}.
\end{align}
Moreover, from equations (\ref{mPX}) and (\ref{uPX}) a Hamiltonian $H_1$ can be identified
\begin{equation}\label{eq12}
H_1[X,P]=\frac{1}{2}\int_{\mathbb{R}}
G(X(y_1,t)-X(y_2,t))P(y_1,t)P(y_2,t)\text{d}y_1\text{d}y_2
\end{equation}
in which case equations (\ref{eq10}) and (\ref{eq11}) can be written as
\begin{equation}\label{eq13}
 X_t(y,t)=\frac{\delta H_1}{\delta P(y,t)}, \qquad P_t(y,t)=-\frac{\delta H_1}{\delta X(y,t)},
\end{equation}
that is to say, these equations are Hamiltonian with respect to
the canonical Poisson bracket
\begin{equation}\label{PBa}
\{A,B\}_c=\int_{\mathbb{R}}\left(\frac{\delta A}{\delta X(y,t)} \frac{\delta B}{\delta P(y,t)}- \frac{\delta B}{\delta X(y,t)}\frac{\delta A}{\delta P(y,t)}\right)\mathrm{d}y.
\end{equation}
where the canonical variables are $X(y,t)$, $P(y,t)$, with
\begin{align}
\label{Can1}\{X(y_1,t), P(y_2,t)\}_c&=\delta(y_1-y_2),\\
\label{Can2}\{P(y_1,t), P(y_2,t)\}_c&=\{X(y_1,t), X(y_2,t)\}_c=0.
\end{align}
Furthermore, using the canonical Poisson brackets (\ref{Can1}), (\ref{Can2}) and the defining integral \eqref{mPX} one can compute
\begin{equation*}
\{q(x_1,t),q(x_2,t)\}_c=-\left(q(x_1,t)\frac{\partial}{\partial x_1}+\frac{\partial}{\partial x_1} q(x_1,t)\right)\delta(x_1-x_2)\equiv\mathcal{J}_1(x_1)\delta(x_1-x_2).
\end{equation*}
Now it is straightforward to check that (\ref{eq:ch}) can be written in a Hamiltonian form as
\begin{equation*}
  q_t=\{q,H_1\}_c,
\end{equation*}
with the Poisson bracket, generated by $ \mathcal{J}_1$:
\begin{equation}
\begin{aligned}\label{PBCH}
\{A,B \}_c&=\int_{\mathbb{R}} \frac{\delta A}{\delta q(x)} \mathcal{J}_1(x) \frac{\delta B}{\delta q(x)}\text{d}x \\
          &= -\int_{\mathbb{R}} q(x) \left(\frac{\delta A}{\delta q(x)} \frac{\partial}{\partial x} \frac{\delta B}{\delta q(x)}-\frac{\delta B}{\delta q(x)}\frac{\partial}{\partial x}\frac{\delta A}{\delta q(x)} \right)\text{d}x
\end{aligned}
\end{equation}
and the Hamiltonian $H_1$ given in equation \eqref{eq12}, which can be written also as
\[H_1[q]=\frac{1}{2}\int_{\mathbb{R}} \big(q(x,t) u(x,t)-u_0^2\big) dx.\]
Thus we have an equivariant momentum map\footnote{This should not be confused with $J$ from the ZS spectral problem \eqref{sec2.2eq5}.} $J: T^*\mathcal{G}\rightarrow\mathfrak{g}^*$ for the co-adjoint action of $\mathcal{G}$ as shown on Figure \ref{MMap}.
\begin{figure}
\[
\begin{diagram}
\node{(dX,dP)_{t=0}\in T^* \mathcal{G} }\arrow[3]{s,r}{J} \arrow[7]{e,t}{g(t) \in \mathcal{G}}
          \node[10]{(dX,dP)_{t} \in T^*\mathcal{G}}    \arrow[3]{s,r}{J}
          \\
          \\
          \\
\node{q(0) \in \mathfrak{g}^*}\arrow[8]{e,t}{\text{Ad}_{g(t)}^*} \node[10]{q(t)\in \mathfrak{g}^*}
\end{diagram}
\]
\caption{Equivariant Momentum Map: quantities related to the Camassa-Holm equation. The subindex $t$ indicates that the corresponding variables are evaluated at time $t$.}
\label{MMap}
\end{figure}
This means that the values of the corresponding $\mathfrak{g}^{*}$ quantities produced by the co-adjoined action of the group $\mathcal{G}$ are conserved by the momentum map $J$ in the sense of \eqref{coad}.

\section{Discussion}
Since the emphasis of this study was on the Zakharov-Shabat dressing method many important additional questions have been overlooked - such as the phase shifts after the interaction and the peakon (antipeakon) limit when $u_0 \to 0.$  These issues have been studied previously, for instance in \cite{P07,P09,PM06}. We mention only that the cuspon behavior is very similar to the peakon behaviour, especially the peakon-antipeakon interactions.

Another interesting aspect of the momentum map obtained here is that in the peakon limit equation \eqref{mPX} becomes the well known singular momentum map
used for the construction of peakon, filament and sheet singular solutions for higher dimensional EPDiff equations  \cite{HoMa2004}.  Holm and Staley \cite{HoSt2003a} introduced the following measure-valued singular momentum solution ansatz for the $n-$dimensional solutions of the EPDiff equation
\begin{equation}\label{m-ansatz}
\mathbf{q}(\mathbf{x},t) = \sum_{a=1}^N\int\mathbf{P}^a(s,t)\,\delta\left(\,\mathbf{x}-\mathbf{Q}^a(s,t)\,\right) d s.
\end{equation}
These singular momentum solutions, called ``diffeons,'' are vector density functions supported in ${\mathbb{R}}^n$ on a set of $N$ surfaces (or curves) of  co-dimension $(n-k)$ for $s\in{\mathbb{R}}^{k}$ with $k<n$.  They may, for example, be supported
on sets of points (vector peakons, $k=0$), one-dimensional
filaments (strings, $k=1$), or two-dimensional surfaces (sheets,
$k=2$) in three dimensions. These solutions represent smooth
embeddings ${\rm Emb}(\mathbb{R}^k,\mathbb{R}^n)$ with $k<n$. In
contrast, the similar expression (\ref{mPX}) for the soliton
solutions represent smooth functions $\mathbb{R}\to\mathbb{R}$.

\section*{Acknowledgements} R.I. is grateful to Prof. D.D. Holm for many discussions on the problems treated in this paper.

\section{Appendix}
In this appendinx we provide some details on the derivation of the soliton-cuspon solution. Applying equation \eqref{sec3.1eq3} to the dressing factor $g$ as given by equation \eqref{sec7.2eq1} ensures the matrix valued residues satisfy the following
\begin{equation}\label{sec7.2eq3}
\begin{split}
&A_{1,y}+h \sigma_3 A_1 - A_1 h_0\sigma_3 - i\omega_1 [J, A_1]=0, \\
&B_{2,y}+h \sigma_3 B_2 - B_2 h_0\sigma_3 - \lambda_2 [J, B_2]=0,
\end{split}
\end{equation}
Writing the rank one matrix solutions $A_1$ and $B_2$ in the form
\begin{equation}\label{sec7.2eq4}
\begin{split}
A_1 &= \ket{n}\bra{m}, \qquad  B_2 = \ket{N}\bra{M},\\
\end{split}
\end{equation}
we deduce
\begin{equation}\label{sec7.2eq5}
\begin{cases}
\partial_y \ket{n} + (h\sigma_3-i\omega_1 J)\ket{n}=0, \qquad \partial_y\bra{m}= \bra{m} (h_0\sigma_3-i\omega_1 J)\\
\partial_y \ket{N} + (h\sigma_3-\lambda_2 J)\ket{N}=0, \qquad \partial_y\bra{M}= \bra{M} (h_0\sigma_3-\lambda_2 J).
\end{cases}
\end{equation}
The vectors  $\bra{m},\bra{M}$ satisfy the bare equations and therefore are known in principle and have been obtained previously (see sections \S\S \ref{sec4.3} -- \ref{sec4.4}).

The dressing factor \eqref{sec7.2eq1} at $\lambda=0$ is
\begin{eqnarray}\label{sec7.2eq12}
g(y,t;0)&=&\id - 2(A_1+B_2)=\mathrm{diag}(g_{11},g_{22})\\
&=&\mathrm{diag}\left(\frac{i\omega_1 M_1 m_2 -\lambda_2 M_2 m_1}{i\omega_1 M_2 m_1-\lambda_2 M_1 m_2},\frac{i\omega_1 M_2 m_1-\lambda_2 M_1 m_2}{i\omega_1 M_1 m_2 -\lambda_2 M_2 m_1}  \right), \nonumber
\end{eqnarray}
while the differential equation for $X(y,t)$ is
\begin{equation}\label{sec7.2eq13}
(\partial_y X) e^{X-2h_0 y-u_0 t} =g_{22}^2=\left(\frac{\lambda_1 M_2 m_1-\lambda_2 M_1 m_2}{\lambda_1 M_1 m_2 -\lambda_2 M_2 m_1}\right)^2.
\end{equation}
Choosing $m_1, m_2$ as per the soliton solution cf. \cite{ILO2017}, and recalling $\Lambda_1=\sqrt{h_0^2-\omega_1^2}$, we then have
\begin{equation*}
\begin{split}
  m_1&=\mu_1\sqrt{\frac{h_0+\Lambda_1}{2\Lambda_1}}e^{\Omega_1(y,t)}+\mu_2 \sqrt{\frac{h_0-\Lambda_1}{2\Lambda_1}}e^{-\Omega_1(y,t)},\\ m_2&=i\left(\mu_1 \sqrt{\frac{h_0-\Lambda_1}{2\Lambda_1}}e^{\Omega_1(y,t)}+\mu_2 \sqrt{\frac{h_0+\Lambda_1}{2\Lambda_1}}e^{-\Omega_1(y,t)}\right) ,
 \end{split}
\end{equation*}
where $\mu_k$ are positive constants. Ignoring an irrelevant overall constant of $\sqrt{\mu_1 \mu_2}(2\Lambda_1)^{-1/2}$ (see \S \ref{sec4.3}) and changing the definition of $\Omega_1(y,t)$ by an additive constant, as given by
\begin{equation}
    \Omega_1(y,t)=\Lambda_{1}\left(y-\frac{t}{2h_0}\left(u_0+\frac{1}{2\omega_1^2}\right)\right)+ \ln \sqrt{\frac{\mu_1}{\mu_2}},
\end{equation}
we obtain the simplified expressions
\begin{equation} \label{ml12}
\begin{split}
  m_1&=\sqrt{h_0+\Lambda_1}e^{\Omega_1(y,t)}+\sqrt{h_0-\Lambda_1}e^{-\Omega_1(y,t)},\\ m_2&=i\left(\sqrt{h_0-\Lambda_1}e^{\Omega_1(y,t)}+\sqrt{h_0+\Lambda_1}e^{-\Omega_1(y,t)}\right).  \end{split}
\end{equation}
Similarly, as per the cuspon solution, we define the constant vector $\bra{M_{(0)}}V_2= (\nu_1,  \nu_2)$ with $\nu_1, \nu_2$ real and positive, thereby ensuring
\begin{equation} \label{Ml12}
\begin{cases}
  &M_1=\sqrt{\Lambda_2+h_0}e^{\Omega_2(y,t)}-\sqrt{\Lambda_2-h_0}e^{-\Omega_2(y,t)},\\
  &M_2=\sqrt{\Lambda_2-h_0}e^{\Omega_2(y,t)}+\sqrt{\Lambda_2+h_0}e^{-\Omega_2(y,t)},\\
  &\Omega_2(y,t)=\Lambda_{2}\left(y-\frac{t}{2h_0}\left(u_0-\frac{1}{2\lambda_2^2}\right)\right)+ \ln \sqrt{\frac{\nu_1}{\nu_2}} , \\
  &\Lambda_2=\sqrt{h_0^2+\lambda_2^2}.
\end{cases}
\end{equation}
The expression
\begin{equation}
g_{22}=\frac{i\omega_1 M_2 m_1-\lambda_2 M_1 m_2}{i\omega_1 M_1 m_2  -\lambda_2 M_2 m_1}=\frac{\mathcal{T}_{CS}}{\mathcal{B}_{CS}},
\end{equation}
whose explicit form may be deduced frome equations equations \eqref{ml12}--\eqref{Ml12}, has denominator
\begin{equation}
\begin{split}
\mathcal{B}_{CS}=&-\omega_1 \lambda_2 \left(\frac{\Lambda_2 - \Lambda_1}{\sqrt{(h_0-\Lambda_1)(\Lambda_2-h_0)}}e^{\Omega_1+\Omega_2}+ \frac{\Lambda_2 - \Lambda_1}{\sqrt{(h_0+\Lambda_1)(\Lambda_2+h_0)}}e^{-\Omega_1-\Omega_2}  \right. \\
  & + \left.  \frac{\Lambda_1 + \Lambda_2}{\sqrt{(h_0-\Lambda_1)(\Lambda_2+h_0)}}e^{\Omega_1-\Omega_2} + \frac{\Lambda_1 + \Lambda_2}{\sqrt{(h_0+\Lambda_1)(\Lambda_2-h_0)}}e^{-\Omega_1+\Omega_2} \right),
\end{split}
\end{equation}
where we note that $\Lambda_2 > h_0 > \Lambda_1$ thus ensuring $\Lambda_2 - \Lambda_1 >0$.

Introducing the constants
\begin{equation} \label{n1n2cs}
n_1=\sqrt[4]{\frac{h_0+\Lambda_1}{h_0-\Lambda_1}},\qquad n_2=\sqrt[4]{\frac{\Lambda_2+h_0}{\Lambda_2-h_0}},
\end{equation}
we re-write this denominator according to
\begin{equation}
\mathcal{B}_{CS}=-\sqrt{\omega_1 \lambda_2} (\Lambda_1^2-\Lambda_1^2) \left(n_1 n_2\frac{e^{\Omega_1+\Omega_2}}{\Lambda_1+\Lambda_2} +\frac{1}{n_1 n_2}\frac{e^{-\Omega_1-\Omega_2}}{\Lambda_1+\Lambda_2}
+\frac{n_1}{n_2}\frac{e^{\Omega_1-\Omega_2}}{\Lambda_2-\Lambda_1}+
\frac{n_2}{n_1}\frac{e^{-\Omega_1+\Omega_2}}{\Lambda_2-\Lambda_1} \right).
\end{equation}
Similarly it is found that the numerator assumes the form
\begin{equation}
\mathcal{T}_{CS}=i\sqrt{\omega_1 \lambda_2} (\Lambda_1^2-\Lambda_2^2) \left(\frac{1}{n_1 n_2}\frac{e^{\Omega_1+\Omega_2}}{\Lambda_1+\Lambda_2} -n_1 n_2\frac{e^{-\Omega_1-\Omega_2}}{\Lambda_1+\Lambda_2}
-\frac{n_2}{n_1}\frac{e^{\Omega_1-\Omega_2}}{\Lambda_2-\Lambda_1}+
\frac{n_1}{n_2}\frac{e^{-\Omega_1+\Omega_2}}{\Lambda_2-\Lambda_1} \right).
\end{equation}
As with the two-cupson solutions we seek a solution of  equation \eqref{sec4.2eq13} in the form of equation \eqref{sec4.1eq7}  with
\begin{equation}
\mathcal{A}_{CS}=\alpha_1 e^{\Omega_1+\Omega_2}+\alpha_2 e^{-\Omega_1-\Omega_2}+\alpha_3 e^{\Omega_1-\Omega_2}+\alpha_4 e^{-\Omega_1+\Omega_2},
\end{equation}
where the constants $\left\{\alpha_l\right\}_{l=1}^{4}$ are as yet unknown. Equation \eqref{sec4.2eq13} ensures that
\begin{equation}
2h_0\mathcal{A}_{CS}\mathcal{B}_{CS}+\mathcal{B}_{CS}\partial_y\mathcal{A}_{CS}-\mathcal{A}_{CS}\partial_y\mathcal{A}_{CS}=2h_0\mathcal{T}_{CS}^2
\end{equation}
has a solution  given by
\begin{equation}
\mathcal{A}_{CS}=\sqrt{\omega_1 \lambda_2} (\Lambda_2^2-\Lambda_1^2) \left(\frac{1}{n_1^3 n_2^3}\frac{e^{\Omega_1+\Omega_2}}{\Lambda_1+\Lambda_2}+n_1^3 n_2^3\frac{e^{-\Omega_1-\Omega_2}}{\Lambda_1+\Lambda_2}
+\frac{n_2^3}{n_1^3}\frac{e^{\Omega_1-\Omega_2}}{\Lambda_2-\Lambda_1}+
\frac{n_1^3}{n_2^3}\frac{e^{-\Omega_1+\Omega_2}}{\Lambda_2-\Lambda_1} \right).
\end{equation}
The ratio $\mathcal{A}_{CS}/\mathcal{B}_{CS}$ may now be written as
\begin{equation} \label{Acs}
\frac{\mathcal{A}_{CS}}{\mathcal{B}_{CS}}=n_1^4 n_2^4\frac{1+ \frac{1}{n_1^6 }\frac{\Lambda_1+\Lambda_2}{\Lambda_2-\Lambda_1}e^{2\Omega_1}+ \frac{1}{n_2^6 }\frac{\Lambda_1+\Lambda_2}{\Lambda_2-\Lambda_1}e^{2\Omega_2} +\frac{e^{2\Omega_1+2\Omega_2}}{n_1^6n_2^6} }{1+ n_1^2\frac{\Lambda_1+\Lambda_2}{\Lambda_2-\Lambda_1}e^{2\Omega_1}+ n_2^2 \frac{\Lambda_1+\Lambda_2}{\Lambda_2-\Lambda_1}e^{2\Omega_2}+ n_1^2n_2^2e^{2\Omega_1+2\Omega_2} }
\end{equation}
which we simplify by means of the following re-definitions:
\begin{equation} \label{Omega12}
\begin{split}
 \Omega_1(y,t)&=\Lambda_{1}\left(y-\frac{t}{2h_0}\left(u_0+\frac{1}{2\omega_1^2}\right)\right)+ \ln \sqrt{\frac{\mu_1}{\mu_2}}-\ln n_1 + \frac{1}{2}\ln \frac{\Lambda_1+\Lambda_2}b{\Lambda_2 - \Lambda_1},\\
 \Omega_2(y,t)&=\Lambda_{2}\left(y-\frac{t}{2h_0}\left(u_0-\frac{1}{2\lambda_2^2}\right)\right)+ \ln \sqrt{\frac{\nu_1}{\nu_2}}-\ln n_2 + \frac{1}{2}\ln \frac{\Lambda_1+\Lambda_2}{\Lambda_2 - \Lambda_1}.
 \end{split}
\end{equation}
These re-dfinitions are valid since $\Lambda_2> \Lambda_1$, as was previously noted.
Alternatively, these may be simply written as
\begin{equation} \label{Omegak7}
\begin{split}
 \Omega_1(y,t)&=\Lambda_{1}\left(y-\frac{t}{2h_0}\left(u_0+\frac{1}{2\omega_1^2}\right)\right)+\xi_1\\
 \Omega_2(y,t)&=\Lambda_{2}\left(y-\frac{t}{2h_0}\left(u_0-\frac{1}{2\lambda_2^2}\right)\right)+\xi_2
\end{split}
\end{equation}
for some constants $\left\{\xi_k\right\}_{k=1}^{2}$ related to the initial separation of the cuspon and soliton. This allows the expression \eqref{Acs} to be written as \begin{equation}
\frac{\mathcal{A}_{CS}}{\mathcal{B}_{CS}}=n_1^4 n_2^4\frac{1+ \frac{1}{n_1^4 }e^{2\Omega_1}+ \frac{1}{n_2^4 }e^{2\Omega_2} +\left(\frac{\Lambda_1-\Lambda_2}{\Lambda_1+\Lambda_2}\right)^2\frac{e^{2\Omega_1+2\Omega_2}}{n_1^4n_2^4} }{1+ n_1^4e^{2\Omega_1}+ n_2^4e^{2\Omega_2} + \left(\frac{\Lambda_1-\Lambda_2}{\Lambda_1+\Lambda_2}\right)^2n_1^4n_2^4e^{2\Omega_1+2\Omega_2} }
\end{equation}
with
\[X(y,t)=\frac{y}{\sqrt{u_0}}+u_0t+\ln\left \vert\frac{\mathcal{A}_{CS}}{\mathcal{B}_{CS}}\right\vert.\]


\begin{thebibliography}{99}


\bibitem{BMS1} Boutet de Monvel A. and Shepelsky D., A Riemann-Hilbert approach for the Degasperis-Procesi equation, {\it Nonlinearity} {\bf 26} (2013), 2081--2107; arXiv:1107.5995 [nlin.SI]

\bibitem{BMS2}Boutet de Monvel A. and Shepelsky D., Riemann-Hilbert approach for the Camassa-Holm equation on the line, {\it C. R. Math. Acad. Sci. Paris } {\bf 343} (2006), 627--32.


\bibitem{CH93} Camassa R. and Holm D., An integrable shallow water equation with peaked solitons.  {\it Phys. Rev. Lett.} {\bf 71}, 1661--1664 (1993); arXiv:patt-sol/9305002

\bibitem{CHH94} Camassa R., Holm D. and Hyman J., A new integrable shallow water equation. {\it Adv. Appl. Mech.} {\bf 31}, (1994) 1--33.

\bibitem{titi} Chen, S.; Foias, C.; Holm, D. D.; Olson, E.; Titi, E. S.; Wynne, S., A connection between the Camassa-Holm equations and turbulent flows in channels and pipes,{\it Phys. Fluids} {\bf 11} (1999), 2343-2353; doi: 10.1063/1.870096


\bibitem{C98} Constantin A., On the inverse spectral problem for the Camassa-Holm equation. {\it J. Funct. Anal.} \textbf{155} (1998) 352--363.

\bibitem{C00} Constantin A., Existence of permanent and breaking waves for a shallow water equation: a geometric approach,  {\it Ann. Inst. Fourier
(Grenoble) } {\bf 50} (2000) 321--362.

\bibitem{C01} Constantin A., On the scattering problem for the Camassa-Holm
equation, {\it Proc. R. Soc. Lond.} {\bf A457} (2001) 953--970.

\bibitem{CE98} Constantin A. and Escher J., Wave breaking for nonlinear nonlocal shallow water equations, {\it Acta Mathematica} {\bf 181}(1998) 229--243.

\bibitem{CE98a} Constantin A. and Escher J., Global existence and blow up for a shallow water equation, {\it Ann. Sc. Norm. Sup. Pisa, Ser. IV} {\bf 26} (1998) pp. 303--328.

\bibitem{CGI} Constantin A., Gerdjikov V. and Ivanov R., Inverse scattering transform for the Camassa-Holm equation, {\it Inv. Problems} {\bf 22}
(2006), 2197--2207; arXiv: nlin/0603019v2 [nlin.SI].


\bibitem{CGI2} Constantin A., Gerdjikov V. and Ivanov R.,  Generalised Fourier transform for the Camassa-Holm hierarchy, {\it Inv. Problems} {\bf 23} (2007) 1565--1597; nlin.SI/0707.2048.



\bibitem{CI16} Constantin, A. and Ivanov, R., Dressing Method for the Degasperis-Procesi Equation, {\it Stud. Appl. Math.} (2016), DOI: 10.1111/sapm.12149; arXiv:1608.02120 [nlin.SI]

\bibitem{CIL} Constantin A., Ivanov R. and Lenells J., Inverse scattering transform for the Degasperis-Procesi equation, {\em Nonlinearity} {\bf 23} (2010), 2559 -- 2575; arXiv:1205.4754 [nlin.SI]

\bibitem{CK03} Constantin A. and Kolev B., Geodesic flow on the diffeomorphism group of the circle, {\it  Comment. Math. Helv.} {\bf
78} (2003) 787--804.

\bibitem{CM99} Constantin A. and McKean H.P., A shallow water equation on the circle, {\it Commun. Pure Appl. Math.} {\bf 52} (1999) 949--982.

\bibitem{CL09} Constantin A. and Lannes D., The hydrodynamical relevance of the Camassa-Holm and Degasperis-Procesi equations, {\it Arch. Ration. Mech. Anal.} (2009) {\bf 192} 165--186.

\bibitem{CS00} Constantin A. and Strauss W., Stability of peakons, {\it Commun. Pure Appl. Math.} {\bf 53} (2000) 603--610.


\bibitem{Dai98}
Dai H.-H., Model equations for nonlinear dispersive waves in a compressible Mooney-Rivlin rod, {\it Acta Mech.} {\bf 127} (1998) 193--207.

\bibitem{DP}
Degasperis A. and Procesi M., Asymptotic integrability,
in {\it Symmetry and Perturbation Theory}, edited by A. Degasperis and G.
Gaeta, World Scientific (1999), pp. 23--37.

\bibitem{DHH}
Degasperis A., Holm D. and Hone A., A new integrable equation with peakon
solutions, {\it Theor. Math. Phys.} {\bf 133} (2002), 1461--1472.

\bibitem {DGH03} Dullin H. R., Gottwald G.~A. and Holm D.~D., Camassa-Holm,
Korteweg-de Vries-5 and other asymptotically equivalent equations
for shallow water waves, {\it Fluid Dynam. Res.} {\bf 33} (2003),
73--95.

\bibitem {DGH04}
Dullin H.~R., Gottwald G.~A. and Holm D.~D., On asymptotically
equivalent shallow water wave equations, {\it Physica D} {\bf190}
(2004), 1--14.




\bibitem{F95} Fokas A. S., On a class of physically important integrable equations, {\it Physica D} {\bf 87} (1995), 145--150.

\bibitem{F80} Fokas A.S. and Fuchssteiner B., On the structure of symplectic
operators and hereditary symmetries, \emph{Lett. Nuovo Cimento}
\textbf{28} (1980) 299--303.

\bibitem{GVY}  Gerdjikov V., Vilasi G. and Yanovski A.,  Integrable Hamiltonian Hierarchies. Spectral and Geometric Methods, {\it Lecture Notes in Physics} {\bf 748}, Springer Verlag, Berlin, Heidelberg, New York (2008).

\bibitem{GH03} Gesztesy, F. and Holden, H.:  Soliton Equations and Their
Algebro-Geometric Solutions, Volume I: (1+1)-Dimensional Continuous
Models, Cambridge studies in advanced mathematics, volume 79.
Cambridge: Cambridge University Press, 2003

\bibitem{HI11} Holm D. and Ivanov R.,  Smooth and peaked solitons of the Camassa-Holm equation and applications, {\it J. Geom. Symm. Phys.} {\bf 22} (2011) 13--49.

\bibitem{HoMa2004}
Holm D. and Marsden J., {\it Momentum Maps and Measure-valued
Solutions (Peakons, Filaments and Sheets) for the EPDiff
Equation}, In: The Breadth of Symplectic and Poisson Geometry,
Progr. Math. {\bf232}, J. Marsden and T. Ratiu (Eds),
Birkh\"auser, Boston, 2004, pp 203--235.

\bibitem{HMR98}  Holm D., Marsden J. and Ratiu T., The Euler-Poincar\'e equations and semidirect products with applications to continuum theories,
{\em Adv. Math.} {\bf 137} (1998) 1--81.


\bibitem{HMR-PRL} Holm D.,  Marsden J. and Ratiu T., {\it Euler-Poincar\'e Models of Ideal Fluids with Nonlinear Dispersion},  Phys. Rev. Lett. {\bf
349} (1998) 4173--4176.

\bibitem{HoSt2003a}
Holm D. and Staley M., {\it Nonlinear Balance and Exchange of Stability in Dynamics of Solitons, Peakons, Ramps/Cliffs and
Leftons in a $1+1$ nonlinear evolutionary PDE}, Phys. Lett. A {\bf308} (2003) 437--444.


\bibitem{H99} Hone, A.: The associated Camassa-Holm equation and the KdV
equation. J. Phys. A: Math. Gen. {\bf 32}, L307-L314 (1999)

\bibitem{PLA} Ivanov R.I., Conformal Properties and B{\"a}cklund Transform for the Associated Camassa-Holm Equation, {\it Phys. Lett.} A {\bf 345}, 235--243 (2005); nlin.SI/0507005.


\bibitem{ILO2017}Ivanov, R., Lyons T. and Orr, N. {A dressing method for soliton solutions of the Camassa-Holm equation}, {\it AIP Conference Proceedings}, {\bf 1895} (2017) 040003, doi 10.1063/1.5007370;  	arXiv:1702.01128 [nlin.SI]

\bibitem{J02} Johnson R.S., Camassa-Holm, Korteweg-de Vries and related models for water waves, {\it J. Fluid. Mech.} {\bf 457} (2002) 63--82.

\bibitem{J03} Johnson R.S.,  On solutions of the Camassa-Holm equation, {\it Proc. Roy. Soc. Lond. A} {\bf 459} (2003) 1687--1708.

\bibitem{J03a} Johnson, R.S.: The Camassa-Holm equation for water waves moving over a shear flow, {\it Fluid Dynamics Research} {\bf 33} (2003) 97--111.

%
\bibitem{K81} Kirillov, A., The orbits of the group of diffeomorphisms of the circle, and local Lie superalgebras, {\em Funct. Anal. Appl. } {\bf 15} (1981) 135--136.

\bibitem{K93} Kirillov, A., The orbit method, II: infinite-dimensional Lie groups and Lie algebras,  {\em Contemp. Math.} {\bf 145} (1993) 33--63.


\bibitem{Li04} Li, Y. and Zhang, J.: The multiple-soliton solutions of the Camassa-Holm equation, {\it Proc. R. Soc. Lond. A } {\bf 460},
2617-2627 (2004)


\bibitem{M1} Matsuno Y., Parametric representation for the multisoliton solution of the Camassa-Holm equation {\it  J. Phys. Soc. Japan} {\bf 74} (2005) 1983--1987;  arXiv:nlin.SI/0504055

\bibitem{M2} Matsuno Y., The peakon limit of the N-soliton solution of the Camassa-Holm equation {\it J. Phys. Soc. Japan} {\bf 76} (2007) 034003, arXiv:nlin/0701051


\bibitem{M98} Misiolek G., A shallow water equation as a geodesic flow on the Bott-Virasoro group, {\it J. Geom. Phys.} {\bf 24} (1998) 203--208.

\bibitem{ZMNP}  Novikov S.P., Manakov S.V., Pitaevsky L.P. and Zakharov V.E.: Theory of solitons: the inverse scattering method. New York: Plenum, 1984

\bibitem{PI} Parker A., On the Camassa-Holm equation and a direct method of solution I. Bilinear form and solitary waves. {\it Proc. R. Soc. Lond. A} {\bf 460} (2004) 2929--2957.

\bibitem{PII} Parker A., On the Camassa-Holm equation and a direct method of solution II. Soliton solutions, {\it Proc. R. Soc. Lond. A} {\bf 461} (2005) 3611--3632.

\bibitem{PIII} Parker A., On the Camassa-Holm equation and a direct method of solution III. N-soliton solutions. {\it Proc. R. Soc. Lond. A} {\bf 461} (2005)
3893--3911.

\bibitem{P07} Parker A., Cusped solitons of the Camassa-Holm equation. I. Cuspon solitary wave and antipeakon limit. {\it Chaos, Solitons \& Fractals} {\bf 34} (2007) 730--739.

\bibitem{P09} Parker A., Cusped solitons of the Camassa-Holm equation. II. Binary cuspon-soliton interactions {\it Chaos, Solitons \& Fractals} {\bf 41} (2009) 1531--1549.

\bibitem{PM06} Parker A, Matsuno Y. The peakon limits of soliton solutions of the Camassa-Holm equation. {\it J Phys Soc Jpn} {\bf 75}(2006) 124001.



\bibitem{RS} Rasin A. and Schiff J., B\"acklund transformations for the Camassa-Holm equation, {\it J. Nonlin. Sci}, {\bf 27} (2017) 45--69;  doi:10.1007/s00332-016-9325-6;
arXiv:1508.05636 [nlin.SI]

\bibitem{PS} Popivanov P. and Slavova A., Nonlinear Waves: An Introduction, World Scientific, NJ (2011).

\bibitem{ZS1} V. E. Zakharov, and A. B. Shabat. A scheme for integrating nonlinear evolution equations of mathematical physics by the inverse scattering method. I, {\it Funct. Anal. Appl. } {\bf 8} (1974) 226--235.

\bibitem{ZS2} V. E. Zakharov, and A. B. Shabat. Integration of the nonlinear equations of mathematical physics by the inverse scattering method II, {\it Funct. Anal. Appl.} {\bf 13} (1979) 166--174.



\end{thebibliography}
\end{document}